\numberwithin{equation}{section}
\theoremstyle{plain}
\newtheorem{assumption}{Assumption}
\newtheorem{theorem}{Theorem}
\newtheorem{lemma}[theorem]{Lemma}
\newtheorem{proposition}[theorem]{Proposition}
\newtheorem{remark}[theorem]{Remark}
\begin{document}

\title[Risk Processes with Subexponential Claims]
{Ruin Probabilities for Risk Processes with Non-Stationary Arrivals and Subexponential Claims}

\author{LINGJIONG ZHU}
\address
{Courant Institute of Mathematical Sciences\newline
\indent New York University\newline
\indent 251 Mercer Street\newline
\indent New York, NY-10012\newline
\indent United States of America}
\email{ling@cims.nyu.edu}

\date{6 April 2013. \textit{Revised:} 3 August 2013}
\subjclass[2000]{91B30; 60G55; 60F10} 
\keywords{Risk processes, ruin probabilities, subexponential distributions, non-stationary processes,
Hawkes processes, shot noise processes, self-correcting point processes}

\begin{abstract}
In this paper, we obtain the finite-horizon and infinite-horizon 
ruin probability asymptotics for risk processes with
claims of subexponential tails for non-stationary arrival processes 
that satisfy a large deviation principle.
As a result, the arrival process can be dependent, non-stationary and non-renewal. 
We give three examples of non-stationary and non-renewal point processes: Hawkes process, Cox process with shot
noise intensity and self-correcting point process. 
We also show some aggregate claims results for these three examples.
\end{abstract}

\maketitle

\section{Introduction}

Let us consider a classical risk model
\begin{equation}
U_{t}=u+pt-\sum_{i=1}^{N_{t}}C_{i},
\end{equation}
where $C_{i}$ are i.i.d. claims distributed as an $R^{+}$-valued random variable $C$, $p>0$ is the premium rate, 
$u>0$ is the initial reserve and $N_{t}$ is a simple point process.

We are interested in the case when $C_{i}$ have heavy tails.
A distribution function $B$ is subexponential, i.e. $B\in\mathcal{S}$ if 
\begin{equation}
\lim_{x\rightarrow\infty}\frac{\mathbb{P}(C_{1}+C_{2}>x)}{\mathbb{P}(C_{1}>x)}=2,
\end{equation}
where $C_{1}$, $C_{2}$ are i.i.d. random variables with distribution function $B$. 
Let us denote $B(x):=\mathbb{P}(C_{1}\geq x)$
and let us assume that $\mathbb{E}[C_{1}]<\infty$ and define $B_{0}(x):=\frac{1}{\mathbb{E}[C]}\int_{0}^{x}\overline{B}(y)dy$,
where $\overline{F}(x)=1-F(x)$ is the complement of any distribution function $F(x)$.
In the paper, the notation $f(x)\sim g(x)$ means $\lim_{x\rightarrow\infty}\frac{f(x)}{g(x)}=1$.

\cite{Goldie} showed that if $B\in\mathcal{S}$ and satisfies some smoothness
conditions, then $B$ belongs to the maximum domain of attraction of either the Frechet distribution
or the Gumbel distribution. In the former case, $\overline{B}$ is regularly varying,
i.e. $\overline{B}(x)=L(x)/x^{\alpha+1}$, for some $\alpha>0$ and we write
it as $\overline{B}\in\mathcal{R}(-\alpha-1)$, $\alpha>0$, where $L(\cdot):\mathbb{R}^{+}\rightarrow\mathbb{R}^{+}$
is a slowly varying function, i.e. $\lim_{x\rightarrow\infty}\frac{L(\gamma x)}{L(x)}=1$ for any $\gamma>0$.

We assume that $B_{0}\in\mathcal{S}$ and either $\overline{B}\in\mathcal{R}(-\alpha-1)$ or $B\in\mathcal{G}$, i.e.
the maximum domain of attraction of Gumbel distribution $\Lambda(x)=\exp\{-e^{-x}\}$. 
A distribution function $F$ is in the maximal domain of attraction
of a distribution with distribution function $H(x)$ if there exist $a_{n}>0$, $b_{n}\in\mathbb{R}$ so that
\begin{equation}
\lim_{n\rightarrow\infty}n\overline{F}(a_{n}x+b_{n})=-\log H(x),\quad x\in\mathbb{R},
\end{equation}
where the limit is interpreted as $\infty$ when $H(x)=0$.
Therefore, the maximal domain of attraction of Gumbel distribution $\mathcal{G}$ consists of the distribution functions $F$ so that
there exist $a_{n}>0$, $b_{n}\in\mathbb{R}$ such that $\lim_{n\rightarrow\infty}n\overline{F}(a_{n}x+b_{n})=e^{-x}$, $x\in\mathbb{R}$.
$\mathcal{G}$ includes Weibull and lognormal distributions. 

$T_{i}=\tau_{i}-\tau_{i-1}$ is the length of the time interval between
two consecutive arrival times of the point process $\tau_{i-1}$ and $\tau_{i}$.
$\tau_{i}$ stands for the $i$th arrival time of the point process.
If $T_{i}$ are i.i.d., with mean $\mathbb{E}[T_{1}]$, then $N_{t}$ is a renewal process and
assume the usual net profit condition
\begin{equation}
\rho:=\frac{\mathbb{E}[C_{1}]}{p\mathbb{E}[T_{1}]}<1,
\end{equation}
then, it is well known that (see \cite{Teugels},
\cite{Veraverbeke} and \cite{Embrechts}), 
\begin{equation}
\lim_{u\rightarrow\infty}\frac{\psi(u)}{\overline{B}_{0}(u)}
=\frac{\rho}{1-\rho},\label{maineqn}
\end{equation}
where $\psi(u):=\mathbb{P}(\tau_{u}<\infty)$ is the infinite-horizon ruin probability, where
\begin{equation}
\tau_{u}:=\inf\{t>0:U_{t}\leq 0\}.
\end{equation}

The extensions when $N_{t}$ is not a renewal process has been studied in \cite{AsmussenEtAl}
when the authors consider a risk process with regenerative structures or a stationary and ergodic process satisfying
certain conditions. 
See also \cite{Araman}, \cite{Schlegel} and \cite{Zwart}. 

But in general, for a simple
point process $N_{t}$, we may not have a regenerative structure and it may not be stationary
and ergodic as assumed in \cite{AsmussenEtAl}. For example, none of the examples
that we will introduce later in Section \ref{Examples} are stationary or have a regenerative structure.
In this paper, we point out that the classical infinite-horizon ruin probability estimate \eqref{maineqn} 
and also finite-horizon ruin probabiliy estimate still hold
as long as there exists a large deviation principle for $(N_{t}/t\in\cdot)$, which is the main
result of this paper, i.e. Theorem \ref{mainthm} and Theorem \ref{Finite} in Section \ref{MainResults}. 
The intuition behind it is that if the arrival times deviate away from its mean with an exponentially
small probability, it will be dominated by the subexponential distributions of the claim sizes.
Our proof is essentially based on checking the conditions proposed in \cite{AsmussenEtAl}.

In Section \ref{Aggregate}, we review some known results about estimates of aggregate claims
when $N_{t}$ is not necessarily renewal and show that a condition is satisfied given the large deviation
principle of $(N_{t}/t\in\cdot)$.

Finally, in Section \ref{Examples}, we give three examples
of non-renewal processes, i.e. Hawkes process (which answers a question of
\cite{Stabile}), Cox process with shot noise intensity 
(which reproves a result that is known, see \cite{Asmussen}), and self-correcting point process
for which our results apply.

\section{Risk Process with Non-Renewal Arrivals and Regularly Varying Claims}

\subsection{Ruin Probabilites}\label{MainResults}

Before we proceed, recall that a sequence $(P_{n})_{n\in\mathbb{N}}$ of probability measures on a topological space $X$ 
satisfies the large deviation principle (LDP) with rate function $I:X\rightarrow\mathbb{R}$ if $I$ is non-negative, 
lower semicontinuous and for any measurable set $A$, we have
\begin{equation}
-\inf_{x\in A^{o}}I(x)\leq\liminf_{n\rightarrow\infty}\frac{1}{n}\log P_{n}(A)
\leq\limsup_{n\rightarrow\infty}\frac{1}{n}\log P_{n}(A)\leq-\inf_{x\in\overline{A}}I(x).
\end{equation}
Here, $A^{o}$ is the interior of $A$ and $\overline{A}$ is its closure. 
See \cite{Dembo} or \cite{VaradhanII} for general background regarding large deviations and the applications. 
Also \cite{Varadhan} has an excellent survey article on this subject.

The following assumption is the main assumption of this paper.

\begin{assumption}\label{A1}

(i) $(N_{t}/t\in\cdot)$
satisfies a large deviation principle with rate function $I(\cdot)$ such that $I(x)=0$ if and only if $x=\mu$.

(ii) $I(x)$ is increasing on $[\mu,\infty)$ and decreasing on $[0,\mu]$.

(iii) The net profit condition is satisfied,
\begin{equation}
\rho:=\frac{\mu\mathbb{E}[C_{1}]}{p}<1.
\end{equation}

(iv)
There exists some $\theta>0$ such that $\mathbb{E}[e^{\theta\sum_{i=1}^{n}T_{i}}]<\infty$ for any $n\in\mathbb{N}$.
\end{assumption}

Under Assumption \ref{A1}, the following two lemmas hold.

\begin{lemma}\label{L1}
Under Assumption \ref{A1}, for any fixed $\epsilon,\epsilon'>0$, there exists a constant $M>0$ such that
\begin{equation}
\mathbb{P}\left(\bigcap_{n=1}^{\infty}\left\{p\sum_{i=1}^{n}T_{i}\leq n\left(\frac{p}{\mu}+\epsilon\right)+M\right\}\right)>1-\epsilon'.
\end{equation}
\end{lemma}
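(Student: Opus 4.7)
The plan is to convert a statement about the sums $\sum_{i=1}^n T_i = \tau_n$ into a statement about $N_t/t$ via the standard duality $\{\tau_n>t\}=\{N_t<n\}$, and then to use the LDP of Assumption \ref{A1}(i)--(ii) on the tail indices and the exponential moment bound of Assumption \ref{A1}(iv) on the finitely many small indices. Setting $t_n:=n(1/\mu+\epsilon/p)+M/p$, the complement of the event in the lemma is $\bigcup_{n\ge 1}\{\tau_n>t_n\}$, which I will control by a union bound $\sum_{n\ge 1}\mathbb{P}(\tau_n>t_n)$.

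The key observation for the tail is that $n/t_n\le x_0:=\mu/(1+\mu\epsilon/p)$ for every $n$ and $M\ge 0$, and $x_0<\mu$. Since $\{\tau_n>t_n\}=\{N_{t_n}/t_n<n/t_n\}\subseteq\{N_{t_n}/t_n\le x_0\}$, the LDP upper bound combined with the monotonicity of $I$ in Assumption \ref{A1}(ii) gives
\begin{equation}
\limsup_{t\to\infty}\frac{1}{t}\log\mathbb{P}\bigl(N_t/t\le x_0\bigr)\le -\inf_{x\in[0,x_0]}I(x)=-I(x_0)<0,
\end{equation}
so there exist $\gamma>0$ and $n_1\in\mathbb{N}$ with $\mathbb{P}(\tau_n>t_n)\le e^{-\gamma t_n}\le e^{-\gamma n/\mu}$ for all $n\ge n_1$. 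Hence I can choose $n_0\ge n_1$ large enough so that $\sum_{n\ge n_0}\mathbb{P}(\tau_n>t_n)<\epsilon'/2$, independently of $M$.

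For the remaining finitely many indices $1\le n<n_0$, I bound $\mathbb{P}(\tau_n>t_n)\le\mathbb{P}(\tau_n>M/p)$ and invoke Assumption \ref{A1}(iv): by Markov's inequality,
\begin{equation}
\mathbb{P}(\tau_n>M/p)\le e^{-\theta M/p}\,\mathbb{E}\bigl[e^{\theta\tau_n}\bigr],
\end{equation}
and $K:=\max_{1\le n<n_0}\mathbb{E}[e^{\theta\tau_n}]$ is finite. Taking $M$ large enough that $n_0 K e^{-\theta M/p}<\epsilon'/2$ and summing with the tail estimate delivers $\mathbb{P}(\bigcup_n\{\tau_n>t_n\})<\epsilon'$, which is the desired conclusion.

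The only delicate point is that the LDP is a statement about continuous time $t\to\infty$ while the union runs over discrete $n$; this is why I split at $n_0$ and handle the initial block through the exponential-moment hypothesis. The monotonicity of $I$ (Assumption \ref{A1}(ii)) is essential here: it turns the infimum of $I$ over the half-line $[0,x_0]$ into the single positive value $I(x_0)$, without which the LDP upper bound would not obviously give a uniform exponential rate.
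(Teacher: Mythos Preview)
Your proof is correct and follows essentially the same approach as the paper: convert to $N_t$ via the duality $\{\tau_n>t\}=\{N_t<n\}$, apply the LDP upper bound with the monotonicity of $I$ at the same threshold $x_0=\mu/(1+\mu\epsilon/p)$ to control the tail of the union, and dispose of the finitely many initial indices by sending $M\to\infty$. The only cosmetic differences are that you invoke Assumption~\ref{A1}(iv) explicitly via Markov's inequality for the head (the paper uses only the weaker consequence $\tau_n<\infty$ a.s.), and you bound the tail sum uniformly in $M$ before choosing $M$, whereas the paper also exploits the extra $e^{-M\cdot\text{rate}}$ decay there.
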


\begin{proof}
Replacing $\epsilon$ by $p\epsilon$ and $M$ by $pM$ in the above equation, 
it is sufficient to prove that
\begin{equation}
\limsup_{M\rightarrow\infty}
\mathbb{P}\left(\bigcup_{n=1}^{\infty}\left\{\sum_{i=1}^{n}T_{i}>n\left(\frac{1}{\mu}+\epsilon\right)+M\right\}\right)=0
\end{equation}
Observe that $\{N_{t}\leq n\}=\{\sum_{i=1}^{n}T_{i}>t\}$ for any $n\in\mathbb{N}$ and $t\in\mathbb{R}^{+}$
and also for any fixed $\mu'<\mu$, there exists some $\delta'>0$ such that $I(\mu')-\delta'>0$
and for sufficiently large $t$,
\begin{equation}
\mathbb{P}(N_{t}/t<\mu')\leq e^{-t[I(\mu')-\delta']},
\end{equation}
where we used fact that $I(\mu')>0$ and $I(\cdot)$ is decreasing on $[0,\mu]$ from Assumption \ref{A1}.

Also for any $N\in\mathbb{N}$,
\begin{equation}
\limsup_{M\rightarrow\infty}\sum_{n<N}\mathbb{P}\left(\sum_{i=1}^{n}T_{i}>n\left(\frac{1}{\mu}+\epsilon\right)+M\right)=0.
\end{equation}
Together, take $N\in\mathbb{N}$ sufficiently large, 
\begin{align}
&\limsup_{M\rightarrow\infty}
\mathbb{P}\left(\bigcup_{n=1}^{\infty}\left\{\sum_{i=1}^{n}T_{i}>n\left(\frac{1}{\mu}+\epsilon\right)+M\right\}\right)
\\
&\leq\limsup_{M\rightarrow\infty}\sum_{n=1}^{\infty}
\mathbb{P}\left(\sum_{i=1}^{n}T_{i}>n\left(\frac{1}{\mu}+\epsilon\right)+M\right)\nonumber
\\
&=\limsup_{M\rightarrow\infty}\sum_{n\geq N}
\mathbb{P}\left(\sum_{i=1}^{n}T_{i}>n\left(\frac{1}{\mu}+\epsilon\right)+M\right)\nonumber
\\
&=\limsup_{M\rightarrow\infty}\sum_{n\geq N}
\mathbb{P}\left(\frac{N_{n(\mu^{-1}+\epsilon)+M}}{n(\mu^{-1}+\epsilon)+M}
\leq\frac{n}{n(\mu^{-1}+\epsilon)+M}\right)\nonumber
\\
&\leq\limsup_{M\rightarrow\infty}\sum_{n=1}^{\infty}
\mathbb{P}\left(\frac{N_{n(\mu^{-1}+\epsilon)+M}}{n(\mu^{-1}+\epsilon)+M}
\leq\frac{\mu}{1+\mu\epsilon}\right)\nonumber
\\
&\leq\limsup_{M\rightarrow\infty}\sum_{n\geq N}e^{-(n(\mu^{-1}+\epsilon)+M)[I(\frac{\mu}{1+\mu\epsilon})-\delta']}=0.\nonumber
\end{align}
\end{proof}

\begin{lemma}\label{L2}
Under Assumption \ref{A1} and further assume that $B_{0}\in\mathcal{S}$, 
\begin{equation}
\lim_{u\rightarrow\infty}\frac{\mathbb{P}(\sup_{n\geq 1}\{n(p\mu^{-1}-\epsilon)-p\sum_{i=1}^{n}T_{i}\}\geq u)}
{\overline{B}_{0}(u)}=0,
\end{equation}
for any sufficiently small $\epsilon>0$.
\end{lemma}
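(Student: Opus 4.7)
My plan is to reduce the supremum in question to a union bound, use the large deviation principle in Assumption~\ref{A1} to get a bound that is exponentially small in $u$, and then invoke the fact that any subexponential tail decays more slowly than any exponential. First rewrite the event inside the probability: $\{n(p\mu^{-1}-\epsilon)-p\sum_{i=1}^{n}T_{i}\geq u\}$ is equivalent to $\{\sum_{i=1}^{n}T_{i}\leq n(\mu^{-1}-\epsilon/p)-u/p\}$, which is empty unless $n\geq N(u):=\lceil u\mu_{\epsilon}/p\rceil$, where $\mu_{\epsilon}:=1/(\mu^{-1}-\epsilon/p)=p\mu/(p-\mu\epsilon)$. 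For all sufficiently small $\epsilon>0$ we have $\mu_{\epsilon}>\mu$.

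For $n\geq N(u)$ set $s_{n}:=n(\mu^{-1}-\epsilon/p)$. Since the threshold $s_{n}-u/p$ is no larger than $s_{n}$, the standard duality $\{\sum_{i=1}^{n}T_{i}\leq t\}=\{N_{t}\geq n\}$ gives the inclusion
\[
\left\{\sum_{i=1}^{n}T_{i}\leq s_{n}-u/p\right\}\subset\{N_{s_{n}}\geq n\}=\{N_{s_{n}}/s_{n}\geq\mu_{\epsilon}\},
\]
so a union bound yields
\[
\mathbb{P}\left(\sup_{n\geq 1}\left\{n(p\mu^{-1}-\epsilon)-p\sum_{i=1}^{n}T_{i}\right\}\geq u\right)\leq\sum_{n\geq N(u)}\mathbb{P}(N_{s_{n}}/s_{n}\geq\mu_{\epsilon}).
\]
The crucial simplification here is that dropping the $-u/p$ replaces a $u$-dependent threshold by the clean ratio $\mu_{\epsilon}$, which is strictly larger than $\mu$.

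By Assumption~\ref{A1}(i)--(ii), $I$ has its unique zero at $\mu$ and is increasing on $[\mu,\infty)$, hence $I(\mu_{\epsilon})>0$. Fix $\delta\in(0,I(\mu_{\epsilon}))$; the LDP upper bound gives, for every $s$ beyond some fixed threshold, the estimate $\mathbb{P}(N_{s}/s\geq\mu_{\epsilon})\leq e^{-s(I(\mu_{\epsilon})-\delta)}$. Because $s_{n}\to\infty$, this bound is valid for all $n\geq N(u)$ once $u$ is large enough, and the resulting geometric series is controlled by a constant multiple of $e^{-N(u)(\mu^{-1}-\epsilon/p)(I(\mu_{\epsilon})-\delta)}$. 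Using $N(u)\geq u\mu_{\epsilon}/p$ and $\mu_{\epsilon}(\mu^{-1}-\epsilon/p)=1$, this simplifies to a bound of order $e^{-u(I(\mu_{\epsilon})-\delta)/p}$, which is exponentially small in $u$.

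To finish, I invoke that $B_{0}\in\mathcal{S}$ implies $\overline{B}_{0}$ is heavy-tailed, i.e.\ $e^{\lambda u}\overline{B}_{0}(u)\to\infty$ for every $\lambda>0$. Consequently the exponential bound above is $o(\overline{B}_{0}(u))$, which gives the stated limit. I do not foresee a serious obstacle: the only mildly subtle point is that the LDP bound is asymptotic, but this is harmless because $s_{n}\geq N(u)(\mu^{-1}-\epsilon/p)\to\infty$ uniformly in $n\geq N(u)$ as $u\to\infty$. Note that Assumption~\ref{A1}(iv) is not required in this approach, since the coarse inclusion $\{\sum T_{i}\leq s_{n}-u/p\}\subset\{\sum T_{i}\leq s_{n}\}$ lets the LDP absorb the dependence on $u$ entirely.
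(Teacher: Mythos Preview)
Your proof is correct and follows essentially the same route as the paper: a union bound, the coarse inclusion that drops the $-u/p$ term, the duality $\{\sum_{i=1}^{n}T_{i}\leq s_{n}\}=\{N_{s_{n}}\geq n\}$, the LDP upper bound applied at the level $\mu_{\epsilon}=(\mu^{-1}-\epsilon/p)^{-1}>\mu$, and the observation that the resulting exponentially small bound is $o(\overline{B}_{0}(u))$ by subexponentiality. Your write-up is in fact slightly more explicit than the paper's (you sum the geometric series and spell out the heavy-tail property of $B_{0}\in\mathcal{S}$), and your remark that Assumption~\ref{A1}(iv) is not used here is accurate.
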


\begin{proof}
Notice that
\begin{align}
\mathbb{P}\left(\sup_{n\geq 1}\left\{n(p\mu^{-1}-\epsilon)-p\sum_{i=1}^{n}T_{i}\right\}\geq u\right)
&\leq\sum_{n=1}^{\infty}\mathbb{P}\left(\sum_{i=1}^{n}T_{i}\leq n\left(\frac{1}{\mu}-\frac{\epsilon}{p}\right)-\frac{u}{p}\right)
\\
&=\sum_{n>\frac{u}{\frac{p}{\mu}-\epsilon}}
\mathbb{P}\left(\sum_{i=1}^{n}T_{i}\leq n\left(\frac{1}{\mu}-\frac{\epsilon}{p}\right)-\frac{u}{p}\right)\nonumber
\\
&\leq\sum_{n>\frac{u}{\frac{p}{\mu}-\epsilon}}
\mathbb{P}\left(\sum_{i=1}^{n}T_{i}\leq n\left(\frac{1}{\mu}-\frac{\epsilon}{p}\right)\right)\nonumber
\\
&\leq\sum_{n>\frac{u}{\frac{p}{\mu}-\epsilon}}
\mathbb{P}\left(N_{n(\frac{1}{\mu}-\frac{\epsilon}{p})}\geq n\right)\nonumber
\\
&\leq\sum_{n>\frac{u}{\frac{p}{\mu}-\epsilon}}e^{-n(\frac{1}{\mu}-\frac{\epsilon}{p})[I((\frac{1}{\mu}-\frac{\epsilon}{p})^{-1})-\delta']}
,\nonumber
\end{align}
which is exponentially small in $u$ as $u\rightarrow\infty$. Since $B_{0}\in\mathcal{S}$ is subexponential,
we have the desired result.
\end{proof}

\cite{AsmussenEtAl} proved that \eqref{maineqn} holds if
we have Lemma \ref{L1} and Lemma \ref{L2}. So our main task here is to prove Lemma \ref{L1} and Lemma \ref{L2}
under following assumptions. Notice that Lemma \ref{L1} holds if $(T_{i})_{i\geq 1}$ is a stationary
and ergodic sequence (by using ergodic theorem). And that is the only place \cite{AsmussenEtAl}
used the stationarity and ergodicity assumption. That is why as long as we can prove Lemma \ref{L1} we can drop
the stationarity and ergodicity assumption. The following is the main assumption for the asymptotic results of ruin probabilities
that we are going to establish in this paper.

We have the following asymptotic estimates for infinite-horizon ruin probabilities.

\begin{theorem}\label{mainthm}
Under Assumption \ref{A1} and further assume that $B_{0}\in\mathcal{S}$, we have
\begin{equation}
\lim_{u\rightarrow\infty}\frac{\psi(u)}{\overline{B}_{0}(u)}
=\frac{\rho}{1-\rho}.
\end{equation}
\end{theorem}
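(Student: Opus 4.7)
The plan is to invoke the main theorem of \cite{AsmussenEtAl} as a black box. That result asserts that the classical asymptotic $\psi(u)/\overline{B}_{0}(u)\to\rho/(1-\rho)$ holds for any arrival process satisfying the two path-level conditions that we have already isolated as Lemmas \ref{L1} and \ref{L2}. Since both lemmas have been established under Assumption \ref{A1} together with $B_{0}\in\mathcal{S}$, the conclusion of the theorem follows immediately by citation.

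In more detail, the strategy of \cite{AsmussenEtAl} splits the ruin event into two regimes. In the heavy-claim regime a single large claim causes ruin, and the subexponential tail of $B_{0}$ combined with a renewal-style calculation and the net profit condition Assumption \ref{A1}(iii) delivers the $\rho/(1-\rho)$ factor. In the complementary regime the authors need two inputs: (a) that the renewal partial sums $p\sum_{i=1}^{n}T_{i}$ eventually dominate a linear function of slope slightly larger than $p/\mu$ uniformly in $n$, which is precisely Lemma \ref{L1}; and (b) that the supremum of the random walk $n(p\mu^{-1}-\epsilon)-p\sum_{i=1}^{n}T_{i}$ has an exceedance tail that is negligible compared to $\overline{B}_{0}(u)$, which is precisely Lemma \ref{L2}. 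Given these two lemmas, the proof of \cite{AsmussenEtAl} goes through verbatim.

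The conceptual point, rather than a technical obstacle, is the observation that in \cite{AsmussenEtAl} the stationarity and ergodicity assumption entered only through an ergodic-theorem argument feeding into the analogue of Lemma \ref{L1}. Here we have replaced that input by the large deviation principle in Assumption \ref{A1}(i), which via the elementary duality $\{N_{t}\leq n\}=\{\sum_{i=1}^{n}T_{i}>t\}$ yields the exponentially small probabilities required, summable over $n$ irrespective of any stationarity. Since an exponentially small quantity is trivially $o(\overline{B}_{0}(u))$ for any $B_{0}\in\mathcal{S}$, there is no remaining obstruction, and the theorem is obtained by combining Lemmas \ref{L1}--\ref{L2} with the machinery of \cite{AsmussenEtAl}.
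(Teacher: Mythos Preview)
Your proposal is correct and matches the paper's own proof, which simply states that the theorem is a direct consequence of Lemma~\ref{L1}, Lemma~\ref{L2}, and Theorem~3.1 of \cite{AsmussenEtAl}. The additional explanatory remarks you include are consistent with the surrounding discussion in the paper and do not alter the approach.
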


\begin{proof}
It is a direct result of Lemma \ref{L1}, Lemma \ref{L2} and Theorem 3.1. in \cite{AsmussenEtAl}.
\end{proof}

\begin{remark}
In Theorem \ref{mainthm}, we can replace the large deviation assumption of $(N_{t}/t\in\cdot)$ by a 
large deviation assumption of $(\frac{1}{n}\sum_{i=1}^{n}T_{i}\in\cdot)$. But usually, if $(N_{t}/t\in\cdot)$
satisfies a large deviation principle with rate function $I(x)$ if and only if $(\frac{1}{n}\sum_{i=1}^{n}T_{i}\in\cdot)$
satisfies a large deviation principle with rate function $xI(1/x)$. The reason we chose to assume the large deviation
for $(N_{t}/t\in\cdot)$ in Assumption \ref{A1} is because when $N_{t}$ is not renewal, the inter-occurrence times
are not i.i.d. and it is usually easier and more natural to establish the large deviation for $(N_{t}/t\in\cdot)$, 
which is at least in the case of our three examples, Hawkes process, Cox process with shot
noise intensity and self-correcting point process.
\end{remark}

Next, let us consider the finite-horizon ruin probabilities.

Let $e(u):=\mathbb{E}[C_{1}-u|C_{1}>u]$ be the mean excess function and 
\begin{equation}
\psi(u,z):=\mathbb{P}(\tau_{u}\leq z),\quad z>0, 
\end{equation}
be the finite-horizon ruin probability.

\begin{remark}
(i) (Regularly Varying Distributions) 
If $\overline{B}(u)=\frac{L(u)}{u^{\alpha+1}}$, $\alpha\in(0,\infty)$, 
i.e. $\overline{B}\in\mathcal{R}(-\alpha-1)$, then, $e(u)\sim\frac{u}{\alpha}$.

(ii) (Lognormal Distributions) If $B(u)=\frac{1}{\sqrt{2\pi}}\int_{-\infty}^{(\log u-\mu)/\sigma}e^{-x^{2}/2}dx$,
then, $B\in\mathcal{G}$ and $B_{0}\in\mathcal{S}$ and $e(u)\sim\frac{\sigma^{2}u}{\log u-\mu}$.

(iii) (Weibull Distributions) If $B(u)=e^{-u^{\alpha}}$, where $\alpha\in(0,1)$, then, $B\in\mathcal{G}$ and
$B_{0}\in\mathcal{S}$ and $e(u)\sim\frac{u^{1-\alpha}}{\alpha}$.
\end{remark}

\begin{remark}
It is well known that if $B\in\mathcal{G}$, i.e. the maximal domain of attraction of Gumbel distribution, then,
\begin{equation}
\lim_{u\rightarrow\infty}\frac{\overline{B}(u+xe(u))}{\overline{B}(u)}=e^{-x},\quad x\in\mathbb{R}.
\end{equation}
\end{remark}

\begin{lemma}\label{ratio}
For any $y_{0}<\infty$, $\lim_{x\rightarrow\infty}\frac{\overline{G}(x+y)}{\overline{G}(x)}=1$ uniformly
for $y\in[0,y_{0}]$ for any $G\in\mathcal{S}$.
\end{lemma}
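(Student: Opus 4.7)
The plan is to prove the pointwise version first, namely that for each fixed $y\geq 0$ one has $\overline{G}(x+y)/\overline{G}(x)\to 1$ as $x\to\infty$, and then upgrade to uniform convergence on $[0,y_{0}]$ using only monotonicity of $\overline{G}$.

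For the pointwise step, I would start from the standard convolution identity
\begin{equation*}
\overline{G^{*2}}(x)=\overline{G}(x)+\int_{0}^{x}\overline{G}(x-t)\,G(dt).
\end{equation*}
Dividing by $\overline{G}(x)$ and invoking $G\in\mathcal{S}$ (so that $\overline{G^{*2}}(x)/\overline{G}(x)\to 2$) yields
\begin{equation*}
\int_{0}^{x}\frac{\overline{G}(x-t)}{\overline{G}(x)}\,G(dt)\longrightarrow 1.
\end{equation*}
For $y>0$, I would split this integral at $y$. Since $\overline{G}$ is non-increasing, the $[0,y]$ piece is at least $G(y)$ (integrand $\geq 1$), while the $[y,x]$ piece is at least $(\overline{G}(x-y)/\overline{G}(x))(G(x)-G(y))$ (since $\overline{G}(x-t)\geq \overline{G}(x-y)$ for $t\in[y,x]$). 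Taking the limit as $x\to\infty$ and using $G(x)\to 1$, one obtains $\limsup_{x\to\infty}\overline{G}(x-y)/\overline{G}(x)\leq 1$. Combined with the trivial monotonicity bound $\overline{G}(x-y)/\overline{G}(x)\geq 1$, this gives $\overline{G}(x-y)/\overline{G}(x)\to 1$, which is exactly the pointwise claim after the shift $x\mapsto x+y$.

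For the uniformity step, observe that for $y\in[0,y_{0}]$ the monotonicity of $\overline{G}$ yields the sandwich
\begin{equation*}
\frac{\overline{G}(x+y_{0})}{\overline{G}(x)}\leq\frac{\overline{G}(x+y)}{\overline{G}(x)}\leq 1.
\end{equation*}
The lower bound tends to $1$ as $x\to\infty$ by the pointwise statement already proved (applied at $y_{0}$), hence the middle quantity converges to $1$ uniformly in $y\in[0,y_{0}]$.

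The only delicate moment is the pointwise step: one must extract an upper bound on $\limsup \overline{G}(x-y)/\overline{G}(x)$ from a limit of a sum of nonnegative terms whose individual limsups are not a priori under control. The lower bounds on the two pieces of the split integral above do this cleanly, and after that the uniformity is immediate, so the proof is essentially a one-line convolution manipulation plus a monotonicity sandwich.
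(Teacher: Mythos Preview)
Your argument is correct and is in fact the standard proof that subexponentiality implies long-tailedness, together with the observation that monotonicity of $\overline{G}$ immediately upgrades pointwise to uniform convergence on compacts. The paper itself does not prove this lemma at all: it simply states the result and refers the reader to Chapter~X of Asmussen's book, so there is no ``paper's own proof'' to compare against. Your write-up therefore supplies more than what the paper does; the only minor comment is that you might state explicitly that $\overline{G}(y)>0$ for every $y$ (subexponential distributions have unbounded support), since you divide by it when extracting $\limsup\overline{G}(x-y)/\overline{G}(x)\le 1$.
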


Lemma \ref{ratio} can be found in Chapter X of \cite{Asmussen}.

We have the following asymptotic estimates for finite-horizon ruin probabilities.

\begin{theorem}\label{Finite}
Under Assumption \ref{A1} and further assume that $B_{0}\in\mathcal{S}$, we have, for any $T>0$,
(i) If $\overline{B}\in\mathcal{R}(-\alpha-1)$, 
\begin{equation}
\lim_{u\rightarrow\infty}\frac{\psi(u,e(u)T)}{\overline{B}_{0}(u)}
=\frac{\rho}{1-\rho}\left[1-\left(1+(1-\rho)\frac{T}{\alpha}\right)^{-\alpha}\right].
\end{equation} 

(ii) If $B\in\mathcal{G}$,
\begin{equation}
\lim_{u\rightarrow\infty}\frac{\psi(u,e(u)T)}{\overline{B}_{0}(u)}
=\frac{\rho}{1-\rho}\left[1-e^{-(1-\rho)T}\right].
\end{equation}
\end{theorem}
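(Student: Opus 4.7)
The plan is to mirror the argument for Theorem \ref{mainthm}. The finite-horizon analogue of Theorem 3.1 in \cite{AsmussenEtAl} takes exactly the same conditions as input---Lemma \ref{L1}, Lemma \ref{L2}, and $B_{0}\in\mathcal{S}$---and delivers the standard subexponential finite-horizon formula
\[
\psi(u,z)\sim\frac{\rho}{1-\rho}\bigl[\overline{B}_{0}(u)-\overline{B}_{0}\bigl(u+(1-\rho)z\bigr)\bigr]
\]
(compare Chapter X of \cite{Asmussen}). Since Lemmas \ref{L1}--\ref{L2} have already been proved under Assumption \ref{A1} alone, this intermediate asymptotic is immediate, and the proof reduces to evaluating the ratio with $z=e(u)T$ in the two tail regimes.

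In case (i), $\overline{B}\in\mathcal{R}(-\alpha-1)$ gives $\overline{B}_{0}\in\mathcal{R}(-\alpha)$, and the mean excess asymptotic $e(u)\sim u/\alpha$ implies $u+(1-\rho)e(u)T=u\bigl(1+(1-\rho)T\,e(u)/u\bigr)$ is asymptotically $u\bigl(1+(1-\rho)T/\alpha\bigr)$. The defining property of regular variation then yields
\[
\frac{\overline{B}_{0}\bigl(u+(1-\rho)e(u)T\bigr)}{\overline{B}_{0}(u)}\longrightarrow\bigl(1+(1-\rho)T/\alpha\bigr)^{-\alpha}.
\]
In case (ii), the hypothesis $B\in\mathcal{G}$ lifts to $B_{0}\in\mathcal{G}$ with an equivalent auxiliary function (a standard fact from extreme-value theory), so
\[
\frac{\overline{B}_{0}\bigl(u+xe(u)\bigr)}{\overline{B}_{0}(u)}\longrightarrow e^{-x},
\]
and setting $x=(1-\rho)T$ produces the exponential form. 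In both cases, dividing the finite-horizon asymptotic by $\overline{B}_{0}(u)$ and letting $u\to\infty$ gives the stated limits.

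The principal obstacle is the finite-horizon asymptotic itself: one must verify that, in the Asmussen--Schmidli--Schmidt proof of the finite-horizon ruin estimate, the stationarity and ergodicity hypotheses are invoked \emph{only} to secure the drift-control lemmas \ref{L1}--\ref{L2}, so that without those hypotheses the argument still runs on the strength of Assumption \ref{A1}. This is the same transfer principle that the authors have already articulated in the paragraph preceding Theorem \ref{mainthm}, and it applies verbatim to the finite-horizon setting because neither L1 nor L2 is sensitive to whether the horizon is $\infty$ or $e(u)T$. Once that transfer is accepted, the rest of the proof is just the short regular-variation / Gumbel-MDA computation above.
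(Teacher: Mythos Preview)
Your proposal diverges from the paper's proof in structure, and the divergence hides a real gap.

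The paper does \emph{not} invoke a finite-horizon analogue of Theorem~3.1 in \cite{AsmussenEtAl} as a black box. Instead it runs an explicit sandwich argument. For the lower bound it uses Lemma~\ref{L1} to replace $p\sum T_i$ by the deterministic drift $n(p/\mu+\epsilon)+M$, obtaining a renewal-type finite-horizon ruin probability at the shifted level $u+M$; it then needs the auxiliary fact $e(u)\geq(1-\epsilon'')e(u+M)$ (proved via Lemma~\ref{ratio}) to align the horizons before invoking the renewal result of \cite{AsmussenK}. For the upper bound it splits the supremum as $X_{\epsilon}(u)+Y_{\epsilon}$, uses Lemma~\ref{L2} to show $\mathbb{P}(Y_\epsilon>u)=o(\overline{B}_0(u))$, and then applies a subexponential convolution lemma (Proposition~1.9, Chapter~X of \cite{Asmussen}), taking care that $X_\epsilon(u)$ depends on $u$ and that both pieces may be negative.

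Your plan instead asserts the intermediate asymptotic $\psi(u,z)\sim\frac{\rho}{1-\rho}[\overline{B}_0(u)-\overline{B}_0(u+(1-\rho)z)]$ under L1--L2 and then substitutes $z=e(u)T$. Two problems arise. First, the ``finite-horizon analogue of Theorem~3.1 in \cite{AsmussenEtAl}'' you cite is not available off the shelf in that reference; the paper's sandwich \emph{is} the verification that the transfer principle works in the finite-horizon case, so calling it ``the same transfer principle\ldots applies verbatim'' is assuming precisely what has to be proved. Second, even granting such a formula, it would have to hold with suitable uniformity in $z$ for the substitution $z=e(u)T$ (which grows with $u$) to be legitimate; you do not address this, whereas the paper's direct sandwich bypasses the issue entirely by working with the specific horizon $e(u)T$ from the start. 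The regular-variation and Gumbel computations you give at the end are fine, but they are the easy part; the substantive content lies in the sandwich you have elided.
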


\begin{proof}
The proof is based on the ideas in \cite{AsmussenEtAl} with some modifications.
When $N_{t}$ is a renewal process, \cite{AsmussenK} proved
both (i) and (ii). Now if $N_{t}$ satisfies Assumption \ref{A1}, then, by Lemma \ref{L1},
\begin{align}
\psi(u,e(u)T)
&=\mathbb{P}\left(\sup_{n\leq e(u)T}\left\{\sum_{i=1}^{n}C_{i}-p\sum_{i=1}^{n}T_{i}\right\}>u\right)
\\
&\geq(1-\epsilon')\mathbb{P}\left(\sup_{n\leq e(u)T}\left\{\sum_{i=1}^{n}C_{i}
-n\left(\frac{p}{\mu}+\epsilon\right)\right\}>u+M\right).\nonumber
\end{align}
Now, in both cases (i) and (ii), we know that $e(x)\sim\frac{\int_{x}^{\infty}\overline{B}(y)dy}{\overline{B}(x)}$.
Since both $B(x)$ and $B_{0}$ belong to $\mathcal{S}$,
Lemma \ref{ratio} implies that $\lim_{x\rightarrow\infty}\frac{e(x+y)}{e(x)}=1$ uniformly for $y\in[0,y_{0}]$ for any $y_{0}<\infty$.
Therefore, for any $\epsilon''\in(0,1)$, we have $e(u)\geq e(u+M)(1-\epsilon'')$ for any sufficiently large $u$ 
and thus we get
\begin{equation}
\psi(u,e(u)T)
\geq(1-\epsilon')\mathbb{P}\left(\sup_{n\leq e(u+M)T(1-\epsilon'')}\left\{\sum_{i=1}^{n}C_{i}
-n\left(\frac{p}{\mu}+\epsilon\right)\right\}>u+M\right).
\end{equation}

Now assume $\overline{B}\in\mathcal{R}(-\alpha-1)$. We have by the corresponding result for renewal $N_{t}$
in \cite{AsmussenK} and Lemma \ref{ratio}, 
\begin{align}
\liminf_{u\rightarrow\infty}\frac{\psi(u,e(u)T)}{\overline{B}_{0}(u)}
&=\liminf_{u\rightarrow\infty}\frac{\psi(u,e(u)T)}{\overline{B}_{0}(u+M)}
\\
&\geq(1-\epsilon')\frac{\rho_{\epsilon}}{1-\rho_{\epsilon}}\left[1-(1+(1-\rho_{\epsilon})T(1-\epsilon'')/\alpha)^{-\alpha}\right],
\nonumber
\end{align}
where $\rho_{\epsilon}:=\frac{\mathbb{E}[C_{1}]}{\frac{p}{\mu}+\epsilon}$. Since it holds for any $\epsilon,\epsilon',\epsilon''>0$,
we proved the lower bound. The case for $B\in\mathcal{G}$ is similar.

Now, let us prove the upper bound. Choose $\epsilon>0$ small enough that $\frac{p}{\mu}-\epsilon>\mathbb{E}[C_{1}]$,
\begin{align}
&\limsup_{u\rightarrow\infty}\frac{\psi(u,e(u)T)}{\overline{B}_{0}(u)}
\\
&=\limsup_{u\rightarrow\infty}\frac{\mathbb{P}(\sup_{n\leq e(u)T}\{\sum_{i=1}^{n}C_{i}-n(p\mu^{-1}-\epsilon)
+n(p\mu^{-1}-\epsilon)-\sum_{i=1}^{n}T_{i}\}>u)}
{\overline{B}_{0}(u)}\nonumber
\\
&\leq\limsup_{u\rightarrow\infty}\frac{\mathbb{P}(X_{\epsilon}(u)+Y_{\epsilon}>u)}{\overline{B}_{0}(u)},\nonumber
\end{align}
where $X_{\epsilon}(u):=\sup_{n\leq e(u)}\{\sum_{i=1}^{n}C_{i}-n(p\mu^{-1}-\epsilon)\}$
and $Y_{\epsilon}:=\sup_{n\geq 1}\{n(p\mu^{-1}-\epsilon)-\sum_{i=1}^{n}T_{i}\}$.
By Lemma \ref{L2}, we have $\lim_{u\rightarrow\infty}\frac{\mathbb{P}(Y_{\epsilon}>u)}{\overline{B}_{0}(u)}=0$
and by the results for the renewal case (\cite{AsmussenK}), for $\overline{B}\in\mathcal{R}(-\alpha-1)$,
\begin{equation}
\mathbb{P}(X_{\epsilon}(u)>u)\sim
\frac{\rho_{\epsilon}}{1-\rho_{\epsilon}}\left[1-(1+(1-\rho_{\epsilon})T/\alpha)^{-\alpha}\right]\overline{B}_{0}(u),
\end{equation}
where $\rho_{\epsilon}:=\frac{\mathbb{E}[C_{1}]}{p\mu^{-1}-\epsilon}$. Let us recall the Proposition 1.9. of 
Chapter X in \cite{Asmussen} which says
that for any distributions $A_{1},A_{2}$ on $\mathbb{R}^{+}$, if we have $\overline{A}_{i}(x)\sim a_{i}\overline{G}(x)$
for some $G\in\mathcal{S}$ and some constants $a_{1}+a_{2}>0$, 
then, $\overline{A_{1}\ast A_{2}}(x)\sim(a_{1}+a_{2})\overline{G}(x)$. In our case $G(x)=B_{0}(x)\in\mathcal{S}$
and $A_{1}$, $A_{2}$ are the distributions of $X_{\epsilon}(u)$ and $Y_{\epsilon}$ with $a_{1}>0$ and $a_{2}=0$.
Notice that $X_{\epsilon}(u)$ and $Y_{\epsilon}$ may be negative. To save the argument, we can 
simply use the fact that $X_{\epsilon}(u)\leq\max\{X_{\epsilon}(u),1\}$ and $Y_{\epsilon}\leq\max\{Y_{\epsilon},1\}$
then apply it to $\max\{X_{\epsilon}(u),1\}$ and $\max\{Y_{\epsilon},1\}$ instead. Also, in our case, $X_{\epsilon}(u)$ depends on $u$, 
but the proof of Proposition 1.9. Chapter X in \cite{Asmussen} still works. Hence, we get
\begin{equation}
\limsup_{u\rightarrow\infty}\frac{\psi(u,e(u)T)}{\overline{B}_{0}(u)}
\leq\frac{\rho_{\epsilon}}{1-\rho_{\epsilon}}\left[1-(1+(1-\rho_{\epsilon})T/\alpha)^{-\alpha}\right].
\end{equation}
Since it holds for any $\epsilon$, we proved the upper bound. The case for $B\in\mathcal{G}$ is similar.
\end{proof}

\subsection{Aggregate Claims}\label{Aggregate}

Let $A_{t}:=\sum_{i=1}^{N_{t-}}C_{i}$ be the aggregate claims up to time $t$, 
where as before we assume here that $C_{i}$ are i.i.d. positive random variables

Consider the following assumptions.
\begin{assumption}\label{A2}
(i) $\mathbb{E}[N_{t}]<\infty$ for any $t$ and $\mathbb{E}[N_{t}]\rightarrow\infty$ as $t\rightarrow\infty$. 

(ii) $\frac{N_{t}}{\mathbb{E}[N_{t}]}\rightarrow 1$, as $t\rightarrow\infty$.

(iii) There exist $\epsilon,\delta>0$ such that
\begin{equation}
\sum_{k>(1+\delta)\mathbb{E}[N_{t}]}
\mathbb{P}(N_{t}>k)(1+\epsilon)^{k}\rightarrow 0,
\end{equation}
as $t\rightarrow\infty$. 
\end{assumption}

\cite{Kluppelberg} proved that under Assumption \ref{A2}, for fixed time $t$, we have
\begin{equation}
\mathbb{P}(A_{t}-\mathbb{E}[A_{t}]>x)\sim\mathbb{E}[N_{t}]\mathbb{P}(C_{1}\geq x),\label{estimate}
\end{equation}
uniformly for $x\geq\gamma\mathbb{E}[N_{t}]$ for any $\gamma>0$.

\begin{remark} 
Indeed, \cite{Kluppelberg} proved a slightly stronger
result which says \eqref{estimate} holds assuming that the claim sizes $C_{i}$ are i.i.d.
with a distribution function $\overline{F}
\in ERV(-\alpha,-\beta)$ for some $1<\alpha\leq\beta<\infty$, where $ERV$ 
denotes the space of extended regular varying functions.
\end{remark}

It is usually easy to check (i) and also under the assumptions in Theorem \ref{mainthm},
$\frac{N_{t}}{t}\rightarrow\mu$ and $(N_{t}/t\in\cdot)$ satisfies a large deviation principle
with rate function $I(x)$ which is nonzero if and only if $x\neq\mu$.
Therefore, if we assume we could prove that $\frac{\mathbb{E}[N_{t}]}{t}\rightarrow\mu$ as $t\rightarrow\infty$,
then (ii) is satisfied. Moreover (iii) can be replaced by

(iii') For any $\mu'>0$, $c_{\mu'}:=\inf_{x\geq\mu'}\frac{I(x)}{x}>0$.

Assume (iii'), we can find some $0<\delta'<\delta$ such that for any $t$ sufficiently large,
\begin{align}
\sum_{k>(1+\delta)\mathbb{E}[N_{t}]}
\mathbb{P}(N_{t}>k)(1+\epsilon)^{k}
&\leq\sum_{k>(1+\delta')\mu t}
\mathbb{P}(N_{t}>k)(1+\epsilon)^{k}
\\
&\leq\sum_{k>(1+\delta')\mu t}e^{-(I(k/t)-\epsilon')t}(1+\epsilon)^{k}\nonumber
\\
&\leq\sum_{k>(1+\delta')\mu t}e^{-(I(k/t)\frac{t}{k}-\epsilon'\frac{1}{(1+\delta')\mu})k}(1+\epsilon)^{k}\nonumber
\\
&\leq\sum_{k>(1+\delta')\mu t}e^{-(c_{(1+\delta')\mu}-\epsilon'\frac{1}{(1+\delta')\mu})k}(1+\epsilon)^{k}.\nonumber
\end{align}
If we pick up $\epsilon'>0$ small enough such that $\epsilon'\frac{1}{(1+\delta')\mu}<c_{(1+\delta')\mu}$,
then, we can pick up $\epsilon>0$ small enough so that $c_{(1+\delta')\mu}-\epsilon'\frac{1}{(1+\delta')\mu}>\log(1+\epsilon)$
and therefore by letting $t\rightarrow\infty$, (iii) is satisfied.

\section{Examples of Non-Renewal Arrival Processes}\label{Examples}

\subsection{Example 1: Hawkes Process}

Hawkes process is a simple point process that has self-exciting property, clustering effect and long memory.
It was first introduced by \cite{Hawkes} and has been widely applied in finance, seismology, neuroscience, DNA modelling
and many other fields.
A simple point process $N_{t}$ is a linear Hawkes process if it has intensity
\begin{equation}
\lambda_{t}=\nu+\sum_{\tau<t}h(t-\tau),
\end{equation}
$h(\cdot):[0,\infty)\rightarrow(0,\infty)$ is integrable
and $\Vert h\Vert_{L^{1}}<1$. We also assume that $N_{t}$ starts with empty past history,
i.e. $N(-\infty,0]=0$. By our definition, the Hawkes process is non-stationary and is in general
even non-Markovian (unless $h(\cdot)$ is an exponential function). Also, it does not have a regenerative
structure. Thus, the conditions in \cite{Asmussen} do not apply here.

Notice that it is well known that, (see for example \cite{Daley})
\begin{equation}
\lim_{t\rightarrow\infty}\frac{N_{t}}{t}=\mu:=\frac{\nu}{1-\Vert h\Vert_{L^{1}}},
\end{equation}
and \cite{Bordenave} proved the a large deviation principle for $(N_{t}/t\in\cdot)$, i.e. Lemma \ref{BT}.
Therefore, it is natural that we can apply the results in our paper to study the ruin probabilities
with subexponential claims when the arrival process is a non-stationary linear Hawkes process.

\begin{lemma}[\cite{Bordenave}]\label{BT}
$(N_{t}/t\in\cdot)$ satisfies a large deviation principle with rate function,
\begin{equation}\label{ratefunctionI}
I(x)=
\begin{cases}
x\log\left(\frac{x}{\nu+x\Vert h\Vert_{L^{1}}}\right)-x+x\Vert h\Vert_{L^{1}}+\nu &\text{if $x\in[0,\infty)$}
\\
+\infty &\text{otherwise}
\end{cases}.
\end{equation}
\end{lemma}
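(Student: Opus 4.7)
The plan is to exploit the immigration-birth (cluster) representation of the linear Hawkes process, following the approach of Bordenave and Torrisi. In this representation, immigrants arrive according to a Poisson process of rate $\nu$ on $[0,\infty)$, and each immigrant initiates an independent Galton--Watson-type cluster in which every individual produces a Poisson number of offspring with mean $\Vert h\Vert_{L^{1}}$, displaced in time according to the density $h(\cdot)/\Vert h\Vert_{L^{1}}$. The subcritical assumption $\Vert h\Vert_{L^{1}}<1$ guarantees that the total size $S$ of a single cluster is almost surely finite. Writing $\widetilde{N}_{t}:=\sum_{i=1}^{K_{t}}S_{i}$, where $K_{t}\sim\mathrm{Poisson}(\nu t)$ counts immigrants by time $t$ and the $S_{i}$ are i.i.d.\ total cluster sizes, the first step is to prove the LDP for $\widetilde{N}_{t}/t$ and then transfer it to $N_{t}/t$ by exponential equivalence.

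For the LDP of $\widetilde{N}_{t}$ I would apply the G\"artner--Ellis theorem. Since an immigrant contributes one point plus independent sub-clusters seeded by a Poisson($\Vert h\Vert_{L^{1}}$) number of offspring, the cluster-size moment generating function $F(\theta):=\mathbb{E}[e^{\theta S}]$ satisfies the Borel-type functional equation
\begin{equation*}
\log F(\theta)=\theta+\Vert h\Vert_{L^{1}}(F(\theta)-1).
\end{equation*}
A branch analysis shows this has a smooth, convex solution $F(\theta)\in[1,\infty)$ on an interval $(-\infty,\theta^{\ast}]$, with $\theta^{\ast}$ the critical value at which $\Vert h\Vert_{L^{1}}F(\theta)=1$. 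For $\theta$ in this range, a standard Poisson-random-sum computation gives
\begin{equation*}
\Lambda(\theta):=\lim_{t\to\infty}\frac{1}{t}\log\mathbb{E}[e^{\theta\widetilde{N}_{t}}]=\nu(F(\theta)-1),
\end{equation*}
and $\Lambda(\theta)=+\infty$ otherwise. Implicit differentiation yields $F'(\theta)=F(\theta)/(1-\Vert h\Vert_{L^{1}}F(\theta))$, which blows up as $\theta\uparrow\theta^{\ast}$, so $\Lambda$ is essentially smooth on $(-\infty,\theta^{\ast})$ and G\"artner--Ellis applies to $\widetilde{N}_{t}/t$.

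The rate function is then $I(x)=\sup_{\theta}(\theta x-\Lambda(\theta))$. Solving the stationary equation $\nu F'(\theta)=x$ gives the explicit optimizer $F(\theta)=x/(\nu+x\Vert h\Vert_{L^{1}})$, and substituting this back into $\theta x-\nu(F(\theta)-1)$, using $\theta=\log F(\theta)-\Vert h\Vert_{L^{1}}(F(\theta)-1)$, the Legendre transform collapses, after a short algebraic simplification, to the claimed expression $x\log(x/(\nu+x\Vert h\Vert_{L^{1}}))-x+x\Vert h\Vert_{L^{1}}+\nu$ on $[0,\infty)$. On $(-\infty,0)$ the rate is $+\infty$ because $N_{t}\geq 0$.

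The main obstacle is transferring the LDP from $\widetilde{N}_{t}$ to $N_{t}$. The Hawkes count $N_{t}$ sees only cluster points that occur by time $t$, whereas $\widetilde{N}_{t}$ counts every point in a cluster whose immigrant arrived by time $t$, even if offspring occur after $t$. One must therefore establish that $|N_{t}-\widetilde{N}_{t}|/t$ is superexponentially negligible, i.e., that for every $\delta>0$, $\limsup_{t\to\infty}\frac{1}{t}\log\mathbb{P}(|N_{t}-\widetilde{N}_{t}|>\delta t)=-\infty$. This uses the integrability of $h$ (so that cluster lifetimes have light tails) together with the exponential moments of $S$ available for $\theta<\theta^{\ast}$. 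Once this exponential equivalence is in hand, $(N_{t}/t)$ inherits the LDP of $(\widetilde{N}_{t}/t)$ with the rate function \eqref{ratefunctionI}.
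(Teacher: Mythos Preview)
The paper does not supply its own proof of this lemma: it is quoted as a result of \cite{Bordenave}, and the subsequent remark notes only that the closed-form expression \eqref{ratefunctionI} for the rate function was first written down in \cite{ZhuI}. Your sketch is precisely the Bordenave--Torrisi argument---immigration-birth cluster representation, G\"artner--Ellis for the compound Poisson sum $\widetilde{N}_t=\sum_{i\le K_t}S_i$, then transfer to $N_t$---so there is nothing to compare against in the present paper. Your Legendre-transform computation is correct and reproduces exactly the simplification credited to \cite{ZhuI}: from $\nu F'(\theta)=x$ and $F'=F/(1-\Vert h\Vert_{L^1}F)$ one gets $F=x/(\nu+x\Vert h\Vert_{L^1})$, and substituting into $\theta x-\nu(F-1)$ with $\theta=\log F-\Vert h\Vert_{L^1}(F-1)$ yields the stated expression.

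One caution on the transfer step. Your parenthetical ``integrability of $h$ (so that cluster lifetimes have light tails)'' is not correct as written: $h\in L^1$ alone does \emph{not} force light tails on cluster durations (e.g.\ $h(t)\asymp t^{-2}$ gives heavy-tailed offspring displacements). What actually drives the argument is that the cluster \emph{sizes} $S_i$ have exponential moments on $(-\infty,\theta^\ast)$, together with a sandwich of $N_t$ between $\widetilde{N}_t$ (upper bound) and the total size of clusters rooted in $[0,at]$ and entirely contained in $[0,t]$ (lower bound), followed by letting $a\uparrow 1$. So the route you outline is the right one, but the stated reason for the boundary control needs to be corrected.
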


\begin{remark}
Indeed, in \cite{Bordenave}, they expressed the rate function $I(\cdot)$ in an alternative way, which
is less explicit. The expression of the rate function in Lemma \ref{BT} was first pointed out in \cite{ZhuI}.
\end{remark}

\begin{lemma}\label{BT2}
$\frac{\mathbb{E}[N_{t}]}{t}\rightarrow\frac{\nu}{1-\Vert h\Vert_{L^{1}}}$ as $t\rightarrow\infty$.
\end{lemma}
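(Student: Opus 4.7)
The plan is to compute $\mathbb{E}[N_t]$ explicitly via the intensity and then pass to the limit directly. Since $\lambda_t$ is a predictable intensity, the process $N_t - \int_0^t \lambda_s\, ds$ is a local martingale; a Gronwall-style argument based on the defining formula $\lambda_t = \nu + \sum_{\tau < t} h(t-\tau)$ together with $h\in L^1$ shows that $m(s):=\mathbb{E}[\lambda_s]$ is finite for every $s\geq 0$, which then gives $\mathbb{E}[N_t]=\int_0^t m(s)\,ds$ by Fubini. Taking expectations in the intensity formula and applying Fubini once more produces the Volterra equation
\begin{equation*}
m(t) \;=\; \nu + \int_0^t h(t-s)\,m(s)\,ds, \qquad t\geq 0.
\end{equation*}

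The second step is to solve this equation asymptotically by iteration. Writing $\alpha:=\|h\|_{L^1}\in(0,1)$ and letting $h^{*n}$ denote the $n$-fold convolution of $h$ with itself, repeated substitution yields
\begin{equation*}
m(t) \;=\; \nu \,+\, \nu\sum_{n=1}^{\infty}\int_0^t h^{*n}(u)\,du.
\end{equation*}
Because $\|h^{*n}\|_{L^1}=\alpha^n$, each summand is bounded above by $\alpha^n$ and satisfies $\int_0^t h^{*n}(u)\,du\to \alpha^n$ as $t\to\infty$. Since $\sum_{n\geq 1}\alpha^n<\infty$, dominated convergence in the counting measure on $\mathbb{N}$ gives
\begin{equation*}
\lim_{t\to\infty} m(t) \;=\; \nu\sum_{n=0}^{\infty}\alpha^n \;=\; \frac{\nu}{1-\|h\|_{L^1}} \;=\; \mu.
\end{equation*}

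Finally, Cesàro averaging delivers
\begin{equation*}
\frac{\mathbb{E}[N_t]}{t} \;=\; \frac{1}{t}\int_0^t m(s)\,ds \;\longrightarrow\; \mu,
\end{equation*}
as required. The main potential snag is verifying integrability of $\lambda_s$ to justify the Fubini steps; this however is routine given the Poisson cluster representation of the linear Hawkes process, which in fact already implies the uniform-in-$t$ bound $\mathbb{E}[N_t]\leq \nu t/(1-\|h\|_{L^1})$. Once that preliminary bound is in hand, every remaining step reduces to $\|h\|_{L^1}<1$ and elementary dominated convergence.
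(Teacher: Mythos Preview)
Your proof is correct, but it proceeds differently from the paper's. Both arguments start from the same renewal-type identity $m(t)=\nu+\int_0^t h(t-s)m(s)\,ds$ for $m(t)=\mathbb{E}[\lambda_t]$, but then diverge. The paper never solves this equation for $m$; instead it first extracts the uniform bound $m(t)\leq\nu/(1-\|h\|_{L^1})$ by a one-line sup argument, and then manipulates $\mathbb{E}[N_t]=\int_0^t m(s)\,ds$ directly, using Fubini to obtain the closed-form
\[
\mathbb{E}[N_t]=\frac{\nu t}{1-\|h\|_{L^1}}-\frac{1}{1-\|h\|_{L^1}}\int_0^t H(t-u)\,m(u)\,du,\qquad H(t)=\int_t^\infty h(s)\,ds,
\]
and concludes because the correction is $o(t)$ (the uniform bound on $m$ combined with $\frac{1}{t}\int_0^t H(u)\,du\to 0$). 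You instead iterate the Volterra equation to get the Neumann series $m(t)=\nu\sum_{n\geq 0}\int_0^t h^{*n}$, invoke dominated convergence to show $m(t)\to\mu$ pointwise, and then finish by Ces\`aro averaging. Your route actually yields a slightly stronger intermediate statement---pointwise convergence of $\mathbb{E}[\lambda_t]$ rather than only of its time average---whereas the paper's route gives an explicit expression for the $O(1)$ error in $\mathbb{E}[N_t]$. Both are short and rely only on $\|h\|_{L^1}<1$.
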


\begin{proof}
Taking expectation of the indentity $\lambda_{t}=\nu+\int_{0}^{t}h(t-s)N(ds)$, we get
\begin{equation}
\mathbb{E}[\lambda_{t}]=\nu+\int_{0}^{t}h(t-s)\mathbb{E}[\lambda_{s}]ds\leq\nu+\Vert h\Vert_{L^{1}}\sup_{0\leq s\leq t}\mathbb{E}[\lambda_{s}]ds,
\end{equation}
which implies that for any $t$, $\sup_{0\leq s\leq t}\mathbb{E}[\lambda_{s}]\leq\frac{\nu}{1-\Vert h\Vert_{L^{1}}}$ and therefore
$\mathbb{E}[\lambda_{t}]\leq\frac{\nu}{1-\Vert h\Vert_{L^{1}}}$ uniformly in $t$. Next, let $H(t):=\int_{t}^{\infty}h(s)ds$ and
\begin{align}
\mathbb{E}[N_{t}]&=\mathbb{E}\left[\int_{0}^{t}\lambda_{s}ds\right]
\\
&=\nu t+\int_{0}^{t}\int_{0}^{s}h(s-u)d\mathbb{E}[N_{u}]ds\nonumber
\\
&=\nu t+\int_{0}^{t}\int_{u}^{t}h(s-u)dsd\mathbb{E}[N_{u}]\nonumber
\\
&=\nu t+\mathbb{E}[N_{t}]\Vert h\Vert_{L^{1}}-\int_{0}^{t}H(t-u)d\mathbb{E}[N_{u}],\nonumber
\end{align}
which implies that
\begin{equation}
\mathbb{E}[N_{t}]=\frac{\nu t}{1-\Vert h\Vert_{L^{1}}}-\int_{0}^{t}H(t-u)\mathbb{E}[\lambda_{u}]du,
\end{equation}
and
\begin{align}
\limsup_{t\rightarrow\infty}\frac{1}{t}\int_{0}^{t}H(t-u)\mathbb{E}[\lambda_{u}]du
&\leq\frac{\nu}{1-\Vert h\Vert_{L^{1}}}\limsup_{t\rightarrow\infty}\frac{1}{t}\int_{0}^{t}H(t-u)du
\\
&=\frac{\nu}{1-\Vert h\Vert_{L^{1}}}\limsup_{t\rightarrow\infty}\frac{1}{t}\int_{0}^{t}H(u)du=0,\nonumber
\end{align}
since $H(t)=\int_{t}^{\infty}h(s)ds\rightarrow 0$ as $t\rightarrow\infty$. 
\end{proof}

Assume the net profit condition $p>\mathbb{E}[C]\frac{\nu}{1-\Vert h\Vert_{L^{1}}}$.

If $C_{i}$ have light tails, then \cite{Stabile} obtained the asymptotics
for the infinite-horizon ruin probability $\psi(u)$ and the finite-horizon ruin probability $\phi(u,uz)$
for any $z>0$. As pointed out in \cite{Stabile} the case when $C_{i}$
are heavy-tailed is open and now we have the tools to handle the case.

\begin{proposition}\label{P1}
Assume the net profit condition $p>\mathbb{E}[C]\frac{\nu}{1-\Vert h\Vert_{L^{1}}}$.

(i) (Infinite-Horizon)
\begin{equation}
\lim_{u\rightarrow\infty}\frac{\psi(u)}{\overline{B}_{0}(u)}
=\frac{\nu\mathbb{E}[C_{1}]}{p(1-\Vert h\Vert_{L^{1}})-\nu\mathbb{E}[C_{1}]}.
\end{equation}

(ii) (Finite-Horizon) For any $T>0$,
\begin{align}
&\lim_{u\rightarrow\infty}\frac{\psi(u,uz)}{\overline{B}_{0}(u)}
\\
&=
\begin{cases}
\frac{\nu\mathbb{E}[C_{1}]}{p(1-\Vert h\Vert_{L^{1}})-\nu\mathbb{E}[C_{1}]}
\left[1-\left(1+\left(\frac{p(1-\Vert h\Vert_{L^{1}})-\nu\mathbb{E}[C_{1}]}{p(1-\Vert h\Vert_{L^{1}})}\right)
\frac{T}{\alpha}\right)^{-\alpha}\right]
&\text{if $\overline{B}\in\mathcal{R}(-\alpha-1)$}
\\
\frac{\nu\mathbb{E}[C_{1}]}{p(1-\Vert h\Vert_{L^{1}})-\nu\mathbb{E}[C_{1}]}
\left[1-e^{-\frac{p(1-\Vert h\Vert_{L^{1}})-\nu\mathbb{E}[C_{1}]}{p(1-\Vert h\Vert_{L^{1}})}T}\right]&\text{if $B\in\mathcal{G}$}
\end{cases}.\nonumber
\end{align}

(iii) (Aggregate Claims) For fixed time $t$,
\begin{equation}
\mathbb{P}(A_{t}-\mathbb{E}[A_{t}]>x)\sim\mathbb{E}[N_{t}]\mathbb{P}(C_{1}\geq x),
\end{equation}
uniformly for $x\geq\gamma\mathbb{E}[N_{t}]$ for any $\gamma>0$.
\end{proposition}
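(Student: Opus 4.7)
The plan is to reduce the three parts of the proposition to direct applications of Theorem \ref{mainthm}, Theorem \ref{Finite} and the result of \cite{Kluppelberg} recalled in Section \ref{Aggregate}. The first task is therefore to verify Assumption \ref{A1} for the Hawkes process. For (i), Lemma \ref{BT} supplies the LDP, and a direct substitution using the identity $\nu+\mu\Vert h\Vert_{L^{1}}=\mu$ with $\mu=\nu/(1-\Vert h\Vert_{L^{1}})$ shows that the rate function \eqref{ratefunctionI} vanishes at $\mu$; uniqueness then follows from strict convexity. For (ii) I would compute $I''$ on $(0,\infty)$; the calculation collapses to $I''(x)=\nu^{2}/[x(\nu+x\Vert h\Vert_{L^{1}})^{2}]>0$, so $I$ is strictly convex and hence decreasing on $[0,\mu]$ and increasing on $[\mu,\infty)$. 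Item (iii) is just the assumed net profit condition after inserting the value of $\mu$.

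The subtle point is (iv). Since the $T_{i}$ are not i.i.d., the exponential moments of $\tau_{n}=\sum_{i=1}^{n}T_{i}$ are not accessible by direct computation. I would exploit the pathwise lower bound $\lambda_{t}\geq\nu$: by a standard coupling the Hawkes process dominates a homogeneous Poisson process of intensity $\nu$, so $\tau_{n}$ is stochastically dominated by a Gamma$(n,\nu)$ random variable whose moment generating function is finite on $(-\infty,\nu)$. This yields (iv) for every $n$.

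With Assumption \ref{A1} verified, parts (i) and (ii) of the proposition follow directly from Theorem \ref{mainthm} and Theorem \ref{Finite} respectively; the explicit formulas are obtained by the algebraic substitutions $\rho=\nu\mathbb{E}[C_{1}]/[p(1-\Vert h\Vert_{L^{1}})]$ and $\rho/(1-\rho)=\nu\mathbb{E}[C_{1}]/[p(1-\Vert h\Vert_{L^{1}})-\nu\mathbb{E}[C_{1}]]$. For (iii) I would verify Assumption \ref{A2} and apply \cite{Kluppelberg}: item (i) of Assumption \ref{A2} follows from Lemma \ref{BT2} together with the uniform bound $\mathbb{E}[\lambda_{t}]\leq\nu/(1-\Vert h\Vert_{L^{1}})$ obtained in its proof, and (ii) from Lemma \ref{BT2} combined with the law of large numbers $N_{t}/t\to\mu$ recalled just before Lemma \ref{BT}. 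For the third condition I would use the equivalent criterion (iii') from Section \ref{Aggregate}: dividing \eqref{ratefunctionI} by $x$ one sees that $I(x)/x$ is continuous and strictly positive on $(\mu,\infty)$ and tends to $-\log\Vert h\Vert_{L^{1}}+\Vert h\Vert_{L^{1}}-1$ as $x\to\infty$, which is positive because $a-1-\log a>0$ for $a\in(0,1)$; consequently $\inf_{x\geq\mu'}I(x)/x>0$ for every $\mu'>\mu$, which is all that the reduction in Section \ref{Aggregate} actually requires.

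The main obstacle is condition (iv) in Assumption \ref{A1}: without the i.i.d. structure of a renewal process, exponential moments of $\tau_{n}$ are not directly available, and the cleanest route is the pathwise Poisson minorization afforded by $\lambda_{t}\geq\nu$ described above. The remaining steps are essentially bookkeeping.
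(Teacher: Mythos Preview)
Your proposal is correct and follows essentially the same route as the paper's proof: verify Assumption~\ref{A1} via Lemma~\ref{BT} and the Poisson minorization $\lambda_t\ge\nu$, then invoke Theorems~\ref{mainthm} and~\ref{Finite}; for part (iii) verify Assumption~\ref{A2} via Lemma~\ref{BT2} and the criterion (iii'). The only difference is that you actually carry out the rate-function computations (the formula $I''(x)=\nu^{2}/[x(\nu+x\Vert h\Vert_{L^{1}})^{2}]$ and the limit $I(x)/x\to\Vert h\Vert_{L^{1}}-1-\log\Vert h\Vert_{L^{1}}>0$) that the paper leaves as ``easily checked'', and you correctly observe that the reduction in Section~\ref{Aggregate} only uses (iii') for $\mu'>\mu$, which is fortunate since $I(\mu)/\mu=0$ makes the condition fail as literally stated for $\mu'\le\mu$.
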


\begin{proof}
To prove (i) and (ii), by Theorem \ref{mainthm} and Theorem \ref{Finite}, it is enough to check the conditions in Assumption \ref{A1}.
(i) and (ii) of Assumption \ref{A2} can be verified by the large deviations result in Lemma \ref{BT} and the properties of the
rate function. (iii) of Assumption \ref{A1} is the assumption of the Proposition \ref{P1}. To check (iv) of Assumption \ref{A1},
notice that by the definition of Hawkes process,
$N_{t}$ stochastically dominates $N_{t}^{\nu}$, a homogenous Poisson process with parameter $\nu>0$. But $T_{i}^{\nu}$ corresponding to
$N_{t}^{\nu}$ are i.i.d. exponentially distributed with parameter $\nu$ and they stochastically dominate $T_{i}$, the length of time interval
between two consecutive arrivals of a Hawkes process. But we know that exponentially distribution has exponential tails and thus
for $\theta>0$ small enough, $\mathbb{E}[e^{\theta\sum_{i=1}^{n}T_{i}}]\leq\mathbb{E}[e^{\theta\sum_{i=1}^{n}T_{i}^{\nu}}]
=\mathbb{E}[e^{\theta T_{1}^{\nu}}]^{n}<\infty$ for any $n\in\mathbb{N}$. Thus (iv) of Assumption \ref{A1} holds. Now, to prove (iii),
it is enough to check (i), (ii) and (iii') of Assumption \ref{A2}. In the proof of Lemma \ref{BT2}, we showed
that $\mathbb{E}[\lambda_{t}]\leq\frac{\nu}{1-\Vert h\Vert_{L^{1}}}$ uniformly in $t$ and thus 
$\mathbb{E}[N_{t}]=\mathbb{E}\left[\int_{0}^{t}\lambda_{s}ds\right]\leq\frac{\nu t}{1-\Vert h\Vert_{L^{1}}}<\infty$ and (i) of Assumption \ref{A2}
is verified. (ii) of Assumption \ref{A2} is a result of Lemma \ref{BT2} and law of large numbers of $N_{t}/t$
and finally (iii') of Assumption \ref{A2} can be verified
by easily checking the rate function in Lemma \ref{BT}.
\end{proof}

\begin{remark}
Indeed, the large deviations for nonlinear Hawkes processes have been established in \cite{ZhuI} and \cite{ZhuII}. Unlike
linear Hawkes processes, the rate function for the large deviations for nonlinear Hawkes processes are less explicit and it is 
therefore more difficult to check if it satisfies the conditions in this paper. This has to be left for future investigations.
\end{remark}

\subsection{Example 2: Cox Process with Shot Noise Intensity}

We consider a Cox process $N_{t}$ with intensity $\lambda_{t}$ that follows a shot noise process
\begin{equation}
\lambda_{t}=\nu(t)+\sum_{\tau^{(1)}<t}g(t-\tau^{(1)}),
\end{equation}
where $\tau^{(1)}$ are the arrival times of an external homogenous Poisson process with intensity $\gamma$.
Here, $g(\cdot):\mathbb{R}^{+}\rightarrow\mathbb{R}^{+}$ is integrable, i.e. $\int_{0}^{\infty}g(t)dt<\infty$
and $\nu(t)$ is a positive, continuous, deterministic function such that $\nu(t)\rightarrow\nu$ as $t\rightarrow\infty$.

The ruin probabilities for heavy-tailed claims with arrival process being a shot noise Cox process is known
in the literature, e.g. see the book by \cite{Asmussen}. But the techniques
in the literature use the very specific features of shot noise Cox process and the proofs are much longer.
Our proof essentially only needs the large deviation result for $(N_{t}/t\in\cdot)$ which is very easy to establish.

Since $N^{(1)}$ is a Poisson process with intensity $\gamma$, by the definition of $\lambda_{t}$, it is easy
to see that
\begin{equation}
\frac{N_{t}}{t}\rightarrow\nu+\gamma\Vert g\Vert_{L^{1}},\quad\text{as $t\rightarrow\infty$}.
\end{equation}

It is not clear to the author if the large deviation result for $(N_{t}/t\in\cdot)$ is known in the literature.
For the sake of completeness, let us establish the large deviation principle here.

\begin{lemma}\label{SN}
$(N_{t}/t\in\cdot)$ satisfies a large deviation principle with rate function,
\begin{equation}
I(x)=
\begin{cases}
\sup_{\theta\in\mathbb{R}}\left\{\theta x-(e^{\theta}-1)\nu
-\gamma(e^{(e^{\theta}-1)\Vert g\Vert_{L^{1}}}-1)\right\}
&\text{if $x\in[0,\infty)$}
\\
+\infty &\text{otherwise}
\end{cases}.
\end{equation}
\end{lemma}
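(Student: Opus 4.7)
The natural tool is the Gärtner--Ellis theorem: I will compute the limiting logarithmic moment generating function
\[
\Lambda(\theta):=\lim_{t\to\infty}\frac{1}{t}\log\mathbb{E}[e^{\theta N_{t}}],\quad \theta\in\mathbb{R},
\]
show that it equals $(e^{\theta}-1)\nu+\gamma(e^{(e^{\theta}-1)\Vert g\Vert_{L^{1}}}-1)$, and conclude that $I(x)=\Lambda^{\ast}(x)$ with the stated form. Since $N_{t}/t\geq 0$, the rate function is automatically $+\infty$ on the negative half-line.

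First I would condition on the intensity path. Because $N$ is Cox with intensity $\lambda$,
\[
\mathbb{E}\bigl[e^{\theta N_{t}}\bigm|\lambda_{\cdot}\bigr]=\exp\Bigl((e^{\theta}-1)\int_{0}^{t}\lambda_{s}\,ds\Bigr).
\]
Using Fubini together with $G(u):=\int_{0}^{u}g(r)\,dr$,
\[
\int_{0}^{t}\lambda_{s}\,ds=\int_{0}^{t}\nu(s)\,ds+\sum_{\tau^{(1)}<t}G\bigl(t-\tau^{(1)}\bigr).
\]
Next I apply the Laplace functional of the underlying homogeneous Poisson process $N^{(1)}$ with intensity $\gamma$ to the function $s\mapsto(e^{\theta}-1)G(t-s)$ on $[0,t]$, giving
\[
\mathbb{E}[e^{\theta N_{t}}]=\exp\!\Bigl((e^{\theta}-1)\!\int_{0}^{t}\!\nu(s)\,ds+\gamma\!\int_{0}^{t}\!\bigl(e^{(e^{\theta}-1)G(u)}-1\bigr)du\Bigr),
\]
after changing variables $u=t-s$.

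Dividing by $t$ and letting $t\to\infty$, the first term tends to $(e^{\theta}-1)\nu$ because $\nu(s)\to\nu$, and the second term tends to $\gamma\bigl(e^{(e^{\theta}-1)\Vert g\Vert_{L^{1}}}-1\bigr)$ because $G(u)\uparrow\Vert g\Vert_{L^{1}}$, so by dominated convergence the integrand converges pointwise to the claimed limit; a Cesàro-type argument on $\frac{1}{t}\int_{0}^{t}(\cdot)\,du$ then gives the desired limit. This yields the explicit formula for $\Lambda(\theta)$, which is real-analytic and finite on all of $\mathbb{R}$, hence essentially smooth, steep and lower semicontinuous. Gärtner--Ellis then delivers a full LDP for $(N_{t}/t\in\cdot)$ with rate function $I(x)=\sup_{\theta}\{\theta x-\Lambda(\theta)\}$, matching the statement.

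The one mildly delicate point is the interchange of limits in showing $\frac{1}{t}\log\mathbb{E}[e^{\theta N_{t}}]\to\Lambda(\theta)$: although $G$ and $\nu(\cdot)$ are bounded so dominated convergence suffices inside the integrals, one has to handle both $\theta>0$ and $\theta<0$ uniformly and note that the bound on the integrand is locally uniform in $\theta$. Once this is in hand, Gärtner--Ellis applies verbatim and the only remaining bookkeeping is verifying that the Legendre transform is $+\infty$ for $x<0$, which follows from letting $\theta\to-\infty$ in $\theta x-\Lambda(\theta)$.
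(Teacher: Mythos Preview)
Your proposal is correct and follows essentially the same route as the paper: compute $\mathbb{E}[e^{\theta N_t}]$ by conditioning on the Cox intensity, apply the Laplace functional of the driving Poisson process $N^{(1)}$, take $t\to\infty$ to obtain $\Lambda(\theta)=(e^{\theta}-1)\nu+\gamma(e^{(e^{\theta}-1)\Vert g\Vert_{L^{1}}}-1)$, and then invoke G\"artner--Ellis, with the $x<0$ case handled by sending $\theta\to-\infty$. Your write-up is in fact slightly more careful than the paper's in justifying the passage to the limit (the Ces\`aro argument) and in noting the smoothness/steepness of $\Lambda$ needed for the full LDP.
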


\begin{proof}
For any $\theta\in\mathbb{R}$, we have
\begin{align}
\mathbb{E}[e^{\theta N_{t}}]
&=\mathbb{E}\left[e^{(e^{\theta}-1)\int_{0}^{t}\lambda_{s}ds}\right]
\\
&=e^{(e^{\theta}-1)\int_{0}^{t}\nu(s)ds}\mathbb{E}\left[e^{(e^{\theta}-1)\int_{0}^{t}\int_{0}^{s}g(s-u)N^{(1)}(du)ds}\right]
\nonumber
\\
&=e^{(e^{\theta}-1)\int_{0}^{t}\nu(s)ds}\mathbb{E}\left[e^{\int_{0}^{t}\left[\int_{u}^{t}(e^{\theta}-1)g(s-u)ds\right]N^{(1)}(du)}
\right]\nonumber
\\
&=e^{(e^{\theta}-1)\int_{0}^{t}\nu(s)ds}e^{\gamma\int_{0}^{t}(e^{\int_{u}^{t}(e^{\theta}-1)g(s-u)ds}-1)du}\nonumber
\\
&=e^{(e^{\theta}-1)\int_{0}^{t}\nu(s)ds}e^{\gamma\int_{0}^{t}(e^{\int_{0}^{t-u}(e^{\theta}-1)g(s)ds}-1)du}\nonumber
\\
&=e^{(e^{\theta}-1)\int_{0}^{t}\nu(s)ds}e^{\gamma\int_{0}^{t}(e^{\int_{0}^{u}(e^{\theta}-1)g(s)ds}-1)du}.\nonumber
\end{align}
Therefore, we have
\begin{equation}
\lim_{t\rightarrow\infty}\frac{1}{t}\log\mathbb{E}[e^{\theta N_{t}}]
=(e^{\theta}-1)\nu+\gamma(e^{(e^{\theta}-1)\Vert g\Vert_{L^{1}}}-1).
\end{equation}
By G\"{a}rtner-Ellis theorem, we conclude that $(N_{t}/t\in\cdot)$ satisfies a large deviation principle
with rate function 
\begin{equation}
I(x)=
\sup_{\theta\in\mathbb{R}}\left\{\theta x-(e^{\theta}-1)\nu
-\gamma(e^{(e^{\theta}-1)\Vert g\Vert_{L^{1}}}-1)\right\}.
\end{equation}
Now, if $x<0$, then for any $\theta<0$, $\theta x-(e^{\theta}-1)\nu
-\gamma(e^{(e^{\theta}-1)\Vert g\Vert_{L^{1}}}-1)\geq\theta x\rightarrow\infty$ if we let $\theta\rightarrow-\infty$. Hence, $I(x)=+\infty$
for $x<0$.
\end{proof}

\begin{lemma}\label{SN2}
$\frac{\mathbb{E}[N_{t}]}{t}\rightarrow\nu+\gamma\Vert g\Vert_{L^{1}}$ as $t\rightarrow\infty$.
\end{lemma}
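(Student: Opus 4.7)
The plan is to compute $\mathbb{E}[N_t]$ explicitly by conditioning on the external Poisson process $N^{(1)}$, and then use a Ces\`aro-type averaging argument.

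First I would write, using the fact that $N_t$ given $(\lambda_s)_{s\ge 0}$ is a Cox process,
\begin{equation*}
\mathbb{E}[N_t]=\mathbb{E}\!\left[\int_{0}^{t}\lambda_{s}\,ds\right]=\int_{0}^{t}\mathbb{E}[\lambda_{s}]\,ds,
\end{equation*}
so the problem reduces to showing $\mathbb{E}[\lambda_{s}]\to\nu+\gamma\Vert g\Vert_{L^{1}}$ as $s\to\infty$, after which Ces\`aro averaging yields the claim.

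Next I would evaluate $\mathbb{E}[\lambda_s]$ explicitly. Since $N^{(1)}$ is a homogeneous Poisson process of intensity $\gamma$, Campbell's formula (or Fubini on the mean measure $\gamma\,du$) gives
\begin{equation*}
\mathbb{E}[\lambda_{s}]=\nu(s)+\mathbb{E}\!\left[\sum_{\tau^{(1)}<s}g(s-\tau^{(1)})\right]=\nu(s)+\gamma\int_{0}^{s}g(s-u)\,du=\nu(s)+\gamma\int_{0}^{s}g(u)\,du.
\end{equation*}
By assumption $\nu(s)\to\nu$ and, since $g\in L^{1}$, $\int_{0}^{s}g(u)\,du\to\Vert g\Vert_{L^{1}}$. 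Hence $\mathbb{E}[\lambda_{s}]\to\nu+\gamma\Vert g\Vert_{L^{1}}$ as $s\to\infty$.

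Finally I would conclude by the elementary fact that if a locally bounded function $f(s)$ has a limit $L$ at infinity, then $\frac{1}{t}\int_{0}^{t}f(s)\,ds\to L$ as $t\to\infty$. Applying this to $f(s):=\mathbb{E}[\lambda_{s}]$ (which is bounded on compacts, being the sum of a continuous function $\nu(s)$ and $\gamma\int_{0}^{s}g(u)\,du$) yields
\begin{equation*}
\frac{\mathbb{E}[N_{t}]}{t}=\frac{1}{t}\int_{0}^{t}\mathbb{E}[\lambda_{s}]\,ds\longrightarrow\nu+\gamma\Vert g\Vert_{L^{1}}.
\end{equation*}
There is no real obstacle here; the only small point to be careful about is that $\nu(t)$ is only assumed continuous (not bounded), so one should note that continuity together with $\nu(t)\to\nu$ implies boundedness on $[0,\infty)$, which is all that is needed to justify the integral computations and the Ces\`aro step.
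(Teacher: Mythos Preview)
Your proof is correct and follows essentially the same approach as the paper: both compute $\mathbb{E}[N_t]=\int_0^t\mathbb{E}[\lambda_s]\,ds$, evaluate $\mathbb{E}[\lambda_s]=\nu(s)+\gamma\int_0^s g(u)\,du$ using the Poisson mean measure, and then pass to the time average. The only cosmetic difference is that you phrase the last step explicitly as a Ces\`aro argument on $\mathbb{E}[\lambda_s]$, whereas the paper writes out the double integral $\int_0^t\nu(s)\,ds+\gamma\int_0^t\int_0^s g(u)\,du\,ds$ and divides by $t$ directly.
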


\begin{proof}
Observe that
\begin{align}
\mathbb{E}[N_{t}]
&=\mathbb{E}\left[\int_{0}^{t}\lambda_{s}ds\right]
\\
&=\int_{0}^{t}\nu(s)ds+\mathbb{E}\left[\int_{0}^{t}\int_{0}^{s}g(s-u)N^{(1)}(du)ds\right]\nonumber
\\
&=\int_{0}^{t}\nu(s)ds+\gamma\int_{0}^{t}\int_{0}^{s}g(s-u)duds\nonumber
\\
&=\int_{0}^{t}\nu(s)ds+\gamma\int_{0}^{t}\int_{0}^{s}g(u)duds,\nonumber
\end{align}
which implies that 
$\frac{\mathbb{E}[N_{t}]}{t}\rightarrow\nu+\gamma\Vert g\Vert_{L^{1}}$ as $t\rightarrow\infty$.
\end{proof}

\begin{proposition}\label{P2}
Assume the net profit condition $p>\mathbb{E}[C](\nu+\gamma\Vert g\Vert_{L^{1}})$.

(i) (Infinite-Horizon)
\begin{equation}
\lim_{u\rightarrow\infty}\frac{\psi(u)}{\overline{B}_{0}(u)}
=\frac{(\nu+\gamma\Vert g\Vert_{L^{1}})\mathbb{E}[C_{1}]}{p-(\nu+\gamma\Vert g\Vert_{L^{1}})\mathbb{E}[C_{1}]}.
\end{equation}

(ii) (Finite-Horizon) For any $T>0$,
\begin{align}
&\lim_{u\rightarrow\infty}\frac{\psi(u,uz)}{\overline{B}_{0}(u)}
\\
&=
\begin{cases}
\frac{(\nu+\gamma\Vert g\Vert_{L^{1}})\mathbb{E}[C_{1}]}{p-(\nu+\gamma\Vert g\Vert_{L^{1}})\mathbb{E}[C_{1}]}
\left[1-\left(1+\left(1-\frac{(\nu+\gamma\Vert g\Vert_{L^{1}})\mathbb{E}[C_{1}]}{p}\right)\frac{T}{\alpha}\right)^{-\alpha}\right]
&\text{if $\overline{B}\in\mathcal{R}(-\alpha-1)$}
\\
\frac{(\nu+\gamma\Vert g\Vert_{L^{1}})\mathbb{E}[C_{1}]}{p-(\nu+\gamma\Vert g\Vert_{L^{1}})\mathbb{E}[C_{1}]}
\left[1-e^{-(p-(\nu+\gamma\Vert g\Vert_{L^{1}})\mathbb{E}[C_{1}])T/p}\right]&\text{if $B\in\mathcal{G}$}
\end{cases}.\nonumber
\end{align}

(iii) (Aggregate Claims) For fixed time $t$,
\begin{equation}
\mathbb{P}(A_{t}-\mathbb{E}[A_{t}]>x)\sim\mathbb{E}[N_{t}]\mathbb{P}(C_{1}\geq x),
\end{equation}
uniformly for $x\geq\gamma\mathbb{E}[N_{t}]$ for any $\gamma>0$.
\end{proposition}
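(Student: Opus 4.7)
The plan is to mirror the proof of Proposition \ref{P1}: parts (i) and (ii) will follow from Theorem \ref{mainthm} and Theorem \ref{Finite} once Assumption \ref{A1} is checked, and part (iii) will follow from the result of \cite{Kluppelberg} once Assumption \ref{A2} (with (iii) replaced by (iii')) is checked. Lemmas \ref{SN} and \ref{SN2} do most of the heavy lifting; what remains is to read off a few structural properties of the rate function $I$ and to add one stochastic-domination argument controlling the tail of the interarrival times.

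First I would verify Assumption \ref{A1}. Part (i) is Lemma \ref{SN}; the uniqueness of the zero of $I$ at $x=\mu=\nu+\gamma\Vert g\Vert_{L^{1}}$ is immediate since the log-MGF $\Lambda(\theta)=(e^{\theta}-1)\nu+\gamma(e^{(e^{\theta}-1)\Vert g\Vert_{L^{1}}}-1)$ is smooth and strictly convex with $\Lambda'(0)=\mu$, so its Fenchel--Legendre conjugate $I$ vanishes only at $\mu$. Part (ii) follows from the convexity of $I$. Part (iii) is the net profit hypothesis of the Proposition. For part (iv) I would use a thinning coupling: since $\lambda_{t}\geq\nu(t)$ and $\nu(\cdot)$ is continuous, positive, and converges to $\nu>0$, we have $\nu_{\min}:=\inf_{t\geq 0}\nu(t)>0$, so $N_{t}$ stochastically dominates a rate-$\nu_{\min}$ homogeneous Poisson process. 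Hence $\sum_{i=1}^{n}T_{i}$ is stochastically dominated by a $\mathrm{Gamma}(n,\nu_{\min})$ variable, which gives $\mathbb{E}[e^{\theta\sum_{i=1}^{n}T_{i}}]<\infty$ for any $\theta<\nu_{\min}$ and any $n$. With (i)--(iv) in hand, Theorem \ref{mainthm} and Theorem \ref{Finite} deliver (i) and (ii) after substituting $\mu$ and $\rho=\mu\mathbb{E}[C_{1}]/p$ into the generic formulas.

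Next I would verify Assumption \ref{A2} (with (iii')) to obtain (iii). Condition (i) is immediate from the computation in the proof of Lemma \ref{SN2}. Condition (ii) combines Lemma \ref{SN2} with the law of large numbers $N_{t}/t\to\mu$ implied by the LDP. For (iii'), I would use the crude bound $I(x)\geq\theta x-\Lambda(\theta)$ valid for every $\theta\geq 0$, which gives $\liminf_{x\to\infty}I(x)/x\geq\theta$ for every $\theta>0$, hence $I(x)/x\to\infty$. Combined with the continuity and strict positivity of $I(x)/x$ on $(\mu,\infty)$, this yields $\inf_{x\geq\mu'}I(x)/x>0$ for every $\mu'>\mu$, which is exactly what the derivation following (iii') in the paper needs.

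I expect the only mildly delicate step to be (iv) of Assumption \ref{A1}: one needs a quantitative lower bound on the intensity in order to extract exponential moments of the interarrival times. The Poisson-domination argument requires $\nu_{\min}>0$, which is automatic here because $\nu(\cdot)$ is continuous, positive, and has positive limit $\nu$. If one were to allow $\nu(t)\to 0$, a finer argument invoking the external Poisson arrivals and the decay of $g$ would be needed, but under the standing assumptions of this subsection that issue does not arise, so everything else reduces to routine manipulations of the explicit rate function of Lemma \ref{SN}.
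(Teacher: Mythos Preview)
Your proposal is correct and follows essentially the same route as the paper: verify Assumption \ref{A1} (via Lemma \ref{SN} for the LDP, the net profit hypothesis, and a Poisson-domination argument for the exponential moments of interarrival times) to get (i)--(ii) from Theorems \ref{mainthm} and \ref{Finite}, and verify Assumption \ref{A2} with (iii') (via Lemma \ref{SN2} and an inspection of the rate function) to get (iii). Your choice of $\nu_{\min}=\inf_{t\geq 0}\nu(t)$ for the dominating Poisson rate is in fact the correct one; the paper writes $\nu^{\ast}=\max_{t\geq 0}\nu(t)$, which appears to be a slip, since the domination $N_{t}\succeq N_{t}^{\text{Poisson}}$ requires a \emph{lower} bound on the intensity.
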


\begin{proof}
To prove (i) and (ii), by Theorem \ref{mainthm} and Theorem \ref{Finite}, it is enough to check the conditions in Assumption \ref{A1}.
(i) and (ii) of Assumption \ref{A2} can be verified by the large deviations result in Lemma \ref{SN} and the properties of the
rate function. (iii) of Assumption \ref{A1} is the assumption of the Proposition \ref{P2}. To check (iv) of Assumption \ref{A1},
notice that by the definition of Hawkes process,
$N_{t}$ stochastically dominates $N_{t}^{\nu^{\ast}}$, an homogenous Poisson process with parameter $\nu^{\ast}:=\max_{t\geq 0}\nu(t)$. 
But $T_{i}^{\nu^{\ast}}$ corresponding to
$N_{t}^{\nu^{\ast}}$ are i.i.d. exponentially distributed with parameter $\nu^{\ast}$ and they stochastically dominate $T_{i}$, the length of time interval
between two consecutive arrivals of a Hawkes process. But we know that exponentially distribution has exponential tails and thus
for $\theta>0$ small enough, $\mathbb{E}[e^{\theta\sum_{i=1}^{n}T_{i}}]\leq\mathbb{E}[e^{\theta\sum_{i=1}^{n}T_{i}^{\nu^{\ast}}}]
=\mathbb{E}[e^{\theta T_{1}^{\nu^{\ast}}}]^{n}<\infty$ for any $n\in\mathbb{N}$. Thus (iv) of Assumption \ref{A1} holds. Now, to prove (iii),
it is enough to check (i), (ii) and (iii') of Assumption \ref{A2}. It is easy to see that
that $\mathbb{E}[\lambda_{t}]=\nu(t)+\gamma\int_{0}^{t}g(s)ds<\infty$ for any $t>0$ and thus 
$\mathbb{E}[N_{t}]=\mathbb{E}\left[\int_{0}^{t}\lambda_{s}ds\right]<\infty$ and (i) of Assumption \ref{A2}
is verified. (ii) of Assumption \ref{A2} is a result of Lemma \ref{SN2} and law of large numbers of $N_{t}/t$
and finally (iii') of Assumption \ref{A2} can be verified
by easily checking the rate function in Lemma \ref{SN}.
\end{proof}

\subsection{Example 3: Self-Correcting Point Process}

A self-correcting point process, also known as the stress-release model, is a simple point process $N$
with empty history, i.e. $N(-\infty,0]=0$ such that it
admits the $\mathcal{F}_{t}$-intensity 
\begin{equation}
\lambda_{t}:=\lambda(Z_{t}),\quad\text{and}\quad
Z_{t}:=t-N_{t-}.\label{masterdefinition}
\end{equation}
The rate function
$\lambda(\cdot):\mathbb{R}\rightarrow\mathbb{R}^{+}$
is continuous and increasing such that
\begin{equation}
0<\lambda^{-}=\lim_{z\rightarrow-\infty}\lambda(z)<1<\lim_{z\rightarrow+\infty}\lambda(z)=\lambda^{+}<\infty.
\end{equation}
Notice that in the definition of intensity in \eqref{masterdefinition}, we used $N_{t-}$ instead of $N_{t}$. That is crucial
to guarantee that the intensity $\lambda_{t}$ for the self-correcting point process is $\mathcal{F}_{t}$-predictable.

The model was first introduced by \cite{Isham} as an example of a process that automatically 
corrects a deviation from its mean. Later, it was studied as a model in seismology. 
The stress builds up at the linear rate $1$ in our model and releases by the amount $1$ at $i$th jump. 
\cite{Vere} discussed an insurance interpretation.  

Under these assumptions, it is well known that $\frac{N_{t}}{t}\rightarrow 1$ as $t\rightarrow\infty$ 
(See for example Proposition 4.3 in \cite{Zheng}). 
Recently, \cite{SenandZhu} proved the following large deviation result.

\begin{lemma}[\cite{SenandZhu}]\label{SZ}
$(N_{t}/t\in\cdot)$ satisfies a large deviation 
principle with rate function
\begin{equation}
I(x)=
\begin{cases}
\Lambda^{-}(x) &\text{if $x>1$}
\\
0 &\text{if $x=1$}
\\
\Lambda^{+}(x) &\text{if $0\leq x<1$}
\\
+\infty &\text{otherwise}
\end{cases},
\end{equation}
where
\begin{equation}
\Lambda^{\pm}(x)=\log\left(\frac{x}{\lambda^{\pm}}\right)x+\lambda^{\pm}-x,\quad x\geq 0.
\end{equation}
\end{lemma}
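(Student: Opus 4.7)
My plan is to establish the LDP by proving matching upper and lower bounds directly rather than applying a Gärtner--Ellis Legendre transform. This is necessary because the stated $I$ is non-convex, jumping from $\Lambda^{\pm}(1)>0$ down to $0$ at $x=1$, so Gärtner--Ellis would yield only its convex envelope. The strategy is to exploit the two distinct ``modes'' of the process: the self-correcting LLN regime concentrated at $N_t/t = 1$, and an extreme regime in which the age process $Z_t = t - N_{t-}$ escapes to $\mp\infty$ and the intensity is effectively frozen at $\lambda^{\pm}$.

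For the upper bound at $x > 1$ (the case $x < 1$ being symmetric), the plan is to combine the Chernoff inequality $\mathbb{P}(N_t \geq xt) \leq e^{-\theta xt}\mathbb{E}[e^{\theta N_t}]$ with the exponential point-process martingale $M_t^{\theta} := \exp(\theta N_t - \int_0^t (e^{\theta}-1)\lambda(Z_s)\,ds)$, which is a true martingale because $\lambda$ is bounded. The resulting Girsanov change of measure to $\mathbb{Q}^{\theta}$ gives
\begin{equation*}
\mathbb{E}[e^{\theta N_t}] = \mathbb{E}_{\mathbb{Q}^{\theta}}\Bigl[\exp\Bigl(\int_0^t (e^{\theta}-1)\lambda(Z_s)\,ds\Bigr)\Bigr],
\end{equation*}
under which $N$ has intensity $e^{\theta}\lambda(Z_s)$. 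For $\theta > -\log\lambda^-$ the tilted downward jump rate strictly exceeds the unit upward drift of $Z$ everywhere, so $Z_t \to -\infty$ $\mathbb{Q}^{\theta}$-almost surely and hence $\lambda(Z_s) \to \lambda^-$; combined with the bound $\lambda \leq \lambda^+$, a bounded-convergence argument shows $t^{-1}\log\mathbb{E}[e^{\theta N_t}] \to (e^{\theta}-1)\lambda^-$. Optimizing $\theta x - (e^{\theta}-1)\lambda^-$ over $\theta > -\log\lambda^-$ produces exactly $\Lambda^-(x)$.

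For the lower bound at $x > 1$, I would construct a favorable scenario: drive $Z$ to a large negative level $-L$ in a short initial window (a fixed-cost event on the exponential scale, since only finitely many jumps in an $O(1)$ interval are required), and then appeal to the classical Poisson LDP. While $Z$ remains below $-L$ the intensity is within $o_L(1)$ of $\lambda^-$, so $N$ is coupled with a Poisson process of rate $\lambda^-$ and the event $\{N_t/t \approx x\}$ has probability at least $e^{-t\Lambda^-(x)-o(t)}$. The rate $0$ at $x=1$ is immediate from the law of large numbers. I expect the main obstacle to be the quantitative control of $Z$'s excursions: on the upper bound side one must show that under $\mathbb{Q}^{\theta}$ the time $Z$ spends in regions where $\lambda(Z)$ is noticeably above $\lambda^-$ is negligible on the exponential scale, which should follow from a Lyapunov-type drift estimate (e.g. with $V(z)=e^{-az}$) for the Markov chain embedded at jump times; on the lower bound side one must ensure that after the initial push $Z$ remains below $-L$ for nearly all of $[0,t]$, again via a Lyapunov argument controlling the return time. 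Once these ``frozen-intensity'' estimates are in hand, the upper and lower bounds match at $\Lambda^-(x)$, the symmetric argument yields $\Lambda^+(x)$ for $0 \leq x < 1$, and the stated LDP follows.
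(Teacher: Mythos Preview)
The paper does not prove this lemma at all: it is stated with the attribution \cite{SenandZhu} and no proof environment follows. The result is imported wholesale from that reference, so there is no in-paper argument to compare your proposal against.

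That said, your sketch is broadly on the right track and correctly identifies the central structural point: the rate function $I$ is \emph{not} convex (the limits $\Lambda^{\pm}(1)$ are both strictly positive while $I(1)=0$), so a naive G\"artner--Ellis computation cannot recover it and one must instead argue the upper and lower bounds separately, exploiting the two ``frozen'' Poisson regimes at rates $\lambda^{\pm}$. Your change-of-measure route for the upper bound and the scenario construction for the lower bound are natural and are essentially the ingredients used in the Sen--Zhu proof. The honest gaps you flag---quantitative control of the time $Z$ spends away from $\mp\infty$ under the tilted measure, and persistence of $Z$ below $-L$ after the initial push---are genuine technical points that require a Lyapunov or drift argument for the Markov process $Z_t$; these are not automatic and constitute the real work. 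One small caution on the upper bound: showing $t^{-1}\log\mathbb{E}[e^{\theta N_t}]\to(e^{\theta}-1)\lambda^{-}$ for $\theta$ large enough does give the Chernoff bound $\Lambda^{-}(x)$ for $x>1$, but you should also verify that the limiting cumulant is exactly $(e^{\theta}-1)\lambda^{-}$ for $\theta>0$ and $(e^{\theta}-1)\lambda^{+}$ for $\theta<0$ (with the kink at $\theta=0$), since this discontinuity of the derivative at the origin is precisely what produces the non-convex dip at $x=1$.
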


\begin{lemma}\label{SZ2}
$\frac{\mathbb{E}[N_{t}]}{t}\rightarrow 1$ as $t\rightarrow\infty$.
\end{lemma}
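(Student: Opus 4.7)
The plan is to combine the almost sure law of large numbers $N_{t}/t \to 1$ (cited from Proposition 4.3 in \cite{Zheng}) with uniform integrability of the family $\{N_{t}/t : t \geq 1\}$, so that Vitali's convergence theorem upgrades almost sure convergence to $L^{1}$ convergence. One half of the desired conclusion, the bound $\liminf_{t\to\infty}\mathbb{E}[N_{t}]/t \geq 1$, is immediate from Fatou's lemma applied to the a.s.\ limit; the substantive content is the matching upper bound, which reduces to uniform integrability.

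To obtain uniform integrability I would exploit the uniform bound on the intensity: since $\lambda(\cdot)$ is continuous and increasing with $\lim_{z\to\infty}\lambda(z)=\lambda^{+}<\infty$, we have $\lambda_{t}=\lambda(Z_{t})\leq\lambda^{+}$ pointwise. A standard coupling (for instance via the Poisson embedding / thinning construction of a point process with predictable intensity) then produces, on an enlarged probability space, a homogeneous Poisson process $\widetilde{N}_{t}$ of rate $\lambda^{+}$ with $N_{t}\leq\widetilde{N}_{t}$ almost surely for every $t$. Consequently
\begin{equation}
\mathbb{E}\bigl[(N_{t}/t)^{2}\bigr]\leq\mathbb{E}\bigl[(\widetilde{N}_{t}/t)^{2}\bigr]=(\lambda^{+})^{2}+\frac{\lambda^{+}}{t},
\end{equation}
which is bounded uniformly in $t\geq 1$. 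This $L^{2}$ bound gives the required uniform integrability, and combined with the a.s.\ limit Vitali's theorem yields $\mathbb{E}[N_{t}]/t\to 1$.

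The only real obstacle is justifying the stochastic domination cleanly; if one prefers to avoid invoking a coupling theorem, the same $L^{2}$ bound can be derived analytically from the fact that $M_{t}:=N_{t}-\int_{0}^{t}\lambda_{s}ds$ is a local martingale with predictable quadratic variation $\int_{0}^{t}\lambda_{s}\,ds\leq\lambda^{+}t$, so that $\mathrm{Var}(N_{t})\leq \lambda^{+}t$ and $\mathbb{E}[N_{t}]\leq\lambda^{+}t$, hence $\mathbb{E}[N_{t}^{2}]\leq\lambda^{+}t+(\lambda^{+}t)^{2}$ and $\sup_{t\geq 1}\mathbb{E}[(N_{t}/t)^{2}]<\infty$. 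Either route is quick; no features of the self-correcting dynamics beyond the uniform bound $\lambda\leq\lambda^{+}$ are needed for the $L^{1}$ upgrade, because the a.s.\ law of large numbers already encodes the self-correcting structure.
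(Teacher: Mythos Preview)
Your argument is correct and takes a genuinely different route from the paper. The paper works with the \emph{compensator} rather than $N_{t}$ itself: writing $\mathbb{E}[N_{t}]=\mathbb{E}\bigl[\int_{0}^{t}\lambda(Z_{s})\,ds\bigr]$, it invokes the ergodic theorem for the Markov process $Z_{t}$ (citing Zheng for the invariant measure $\pi$) to get $\frac{1}{t}\int_{0}^{t}\lambda(Z_{s})\,ds\to\int\lambda(z)\,\pi(dz)$ a.s., identifies this limit as $1$ via the generator relation $\mathcal{A}z=1-\lambda(z)$ and $\int\mathcal{A}z\,\pi(dz)=0$, and then passes to expectations by \emph{bounded} convergence, since $\lambda^{-}\le\frac{1}{t}\int_{0}^{t}\lambda(Z_{s})\,ds\le\lambda^{+}$ pointwise. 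Your approach instead applies the a.s.\ LLN to $N_{t}/t$ directly and upgrades via an $L^{2}$ bound; this avoids the invariant-measure and generator computation entirely, at the cost of needing a genuine uniform-integrability argument because $N_{t}/t$ is unbounded. Either way the only structural input beyond the cited LLN is the uniform bound on $\lambda$.

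One small slip in your alternative martingale route: from $\langle M\rangle_{t}\le\lambda^{+}t$ you only get $\mathbb{E}[M_{t}^{2}]\le\lambda^{+}t$, not $\mathrm{Var}(N_{t})\le\lambda^{+}t$, since the compensator $A_{t}=\int_{0}^{t}\lambda_{s}\,ds$ is random here. The fix is immediate: $A_{t}\le\lambda^{+}t$ a.s., so $N_{t}^{2}\le 2M_{t}^{2}+2(\lambda^{+}t)^{2}$ and $\mathbb{E}[(N_{t}/t)^{2}]\le 2\lambda^{+}/t+2(\lambda^{+})^{2}$, which is still uniformly bounded. Your coupling argument already gives the sharper bound without this issue.
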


\begin{proof}
$\mathbb{E}[N_{t}]=\mathbb{E}\left[\int_{0}^{t}\lambda(Z_{s})ds\right]$. Zheng \cite{Zheng} proved
that there exists a unique invariant measure $\pi(dz)$ for the Markov process $Z_{t}$. 
By ergodic theorem, we have
\begin{equation}
\frac{1}{t}\int_{0}^{t}\lambda(Z_{s})ds\rightarrow\int\lambda(z)\pi(dz),
\end{equation}
as $t\rightarrow\infty$.
We know that $Z_{t}=t-N_{t}$ has the generator
\begin{equation}
\mathcal{A}f(z)=\frac{\partial f}{\partial z}+\lambda(z)(f(z-1)-f(z)),
\end{equation}
and we have $\mathcal{A}z\pi=0$ which implies that $\int\lambda(z)\pi(dz)=1$ and thus
$\frac{1}{t}\int_{0}^{t}\lambda(Z_{s})ds\rightarrow 1$ a.s. as $t\rightarrow\infty$. 
Since $\lambda^{-}\leq\lambda(\cdot)\leq\lambda^{+}$, by bounded convergence theorem, we conclude
that $\frac{\mathbb{E}[N_{t}]}{t}\rightarrow 1$ as $t\rightarrow\infty$. 
\end{proof}

\begin{proposition}\label{P3}
Assume the net profit condition $p>\mathbb{E}[C]$.

(i) (Infinite-Horizon)
\begin{equation}
\lim_{u\rightarrow\infty}\frac{\psi(u)}{\overline{B}_{0}(u)}
=\frac{\mathbb{E}[C_{1}]}{p-\mathbb{E}[C_{1}]}.
\end{equation}

(ii) (Finite-Horizon) For any $T>0$,
\begin{equation}
\lim_{u\rightarrow\infty}\frac{\psi(u,uz)}{\overline{B}_{0}(u)}
=
\begin{cases}
\frac{\mathbb{E}[C_{1}]}{p-\mathbb{E}[C_{1}]}\left[1-\left(1+\left(1-\frac{\mathbb{E}[C_{1}]}{p}\right)\frac{T}{\alpha}\right)^{-\alpha}\right]
&\text{if $\overline{B}\in\mathcal{R}(-\alpha-1)$}
\\
\frac{\mathbb{E}[C_{1}]}{p-\mathbb{E}[C_{1}]}\left[1-e^{-(p-\mathbb{E}[C_{1}])T/p}\right]&\text{if $B\in\mathcal{G}$}
\end{cases}.
\end{equation}

(iii) (Aggregate Claims) For fixed time $t$,
\begin{equation}
\mathbb{P}(A_{t}-\mathbb{E}[A_{t}]>x)\sim\mathbb{E}[N_{t}]\mathbb{P}(C_{1}\geq x),
\end{equation}
uniformly for $x\geq\gamma\mathbb{E}[N_{t}]$ for any $\gamma>0$.
\end{proposition}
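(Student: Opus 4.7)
The plan is to mirror the proofs of Propositions \ref{P1} and \ref{P2}: for parts (i) and (ii) I will verify Assumption \ref{A1} and then invoke Theorem \ref{mainthm} and Theorem \ref{Finite}; for part (iii) I will verify Assumption \ref{A2} (with the (iii$'$) variant from Section \ref{Aggregate}) and invoke \eqref{estimate}. By Lemma \ref{SZ}, the LDP holds with unique zero $\mu=1$, which already gives (i) of Assumption \ref{A1} and fixes $\rho=\mathbb{E}[C_{1}]/p$, so the net profit condition in the proposition is exactly (iii).

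For (ii) of Assumption \ref{A1}, I would differentiate: $(\Lambda^{\pm})'(x)=\log(x/\lambda^{\pm})$. Since $\lambda^{-}<1<\lambda^{+}$, we have $(\Lambda^{-})'(x)>\log(1/\lambda^{-})>0$ on $(1,\infty)$ and $(\Lambda^{+})'(x)<\log(1/\lambda^{+})<0$ on $[0,1)$, giving the required monotonicity of $I$. For (iv), the self-correcting intensity satisfies $\lambda_{t}=\lambda(Z_{t})\geq\lambda^{-}>0$, so $N_{t}$ stochastically dominates a homogeneous Poisson process $N_{t}^{\lambda^{-}}$ of rate $\lambda^{-}$; equivalently, the $n$-th arrival time $\sum_{i=1}^{n}T_{i}$ is stochastically dominated by the $n$-th arrival time of $N_{t}^{\lambda^{-}}$, which is a sum of $n$ i.i.d.\ $\mathrm{Exp}(\lambda^{-})$ variables. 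Hence for any $0<\theta<\lambda^{-}$,
\begin{equation}
\mathbb{E}\bigl[e^{\theta\sum_{i=1}^{n}T_{i}}\bigr]\leq\left(\frac{\lambda^{-}}{\lambda^{-}-\theta}\right)^{n}<\infty,
\end{equation}
which establishes (iv). Applying Theorem \ref{mainthm} with $\mu=1$ yields (i), and applying Theorem \ref{Finite} with $\mu=1$ and $\rho=\mathbb{E}[C_{1}]/p$ yields (ii).

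For part (iii), (i) of Assumption \ref{A2} is immediate from $\mathbb{E}[N_{t}]=\mathbb{E}[\int_{0}^{t}\lambda(Z_{s})ds]\leq\lambda^{+}t<\infty$ together with Lemma \ref{SZ2}. Condition (ii) of Assumption \ref{A2} follows by combining $N_{t}/t\to 1$ a.s.\ (see Proposition 4.3 of \cite{Zheng}) with $\mathbb{E}[N_{t}]/t\to 1$ from Lemma \ref{SZ2}. For (iii$'$), I would bound $I(x)/x$ from below on $[\mu',\infty)$: for $x>1$, $\Lambda^{-}(x)/x=\log(x/\lambda^{-})+\lambda^{-}/x-1\to\infty$ as $x\to\infty$, and on any compact subinterval of $[\mu',\infty)$ the continuous function $I(x)/x$ stays strictly positive since $I(x)>0$ whenever $x\neq 1$; combining these yields $c_{\mu'}>0$ for every $\mu'>0$. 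Together with the reduction of (iii) to (iii$'$) given in Section \ref{Aggregate}, \eqref{estimate} applies and gives (iii).

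I do not anticipate a genuine obstacle; the only mildly non-routine point is (iv) of Assumption \ref{A1}, where the $T_{i}$ are dependent, but the uniform lower bound $\lambda^{-}>0$ on the intensity converts this into a clean stochastic-domination argument as above.
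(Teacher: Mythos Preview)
Your proof is correct and follows the same route as the paper: verify Assumption~\ref{A1} via Lemma~\ref{SZ} and stochastic domination by a Poisson($\lambda^{-}$) process, then Assumption~\ref{A2} (with~(iii$'$)) via Lemma~\ref{SZ2} and the explicit rate function. The only slip is the claim that $c_{\mu'}>0$ for \emph{every} $\mu'>0$, which fails for $\mu'\le 1$ since $I(1)=0$; but the reduction in Section~\ref{Aggregate} only invokes~(iii$'$) with $\mu'=(1+\delta')\mu>1$, so this is harmless.
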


\begin{proof}
To prove (i) and (ii), by Theorem \ref{mainthm} and Theorem \ref{Finite}, it is enough to check the conditions in Assumption \ref{A1}.
(i) and (ii) of Assumption \ref{A2} can be verified by the large deviations result in Lemma \ref{SZ} and the properties of the
rate function. (iii) of Assumption \ref{A1} is the assumption of the Proposition \ref{P3}. To check (iv) of Assumption \ref{A1},
notice that by the definition of Hawkes process,
$N_{t}$ stochastically dominates $N_{t}^{\lambda^{-}}$, an homogenous Poisson process with parameter $\lambda^{-}$. 
But $T_{i}^{\lambda^{-}}$ corresponding to
$N_{t}^{\lambda^{-}}$ are i.i.d. exponentially distributed with parameter $\lambda^{-}$ and 
they stochastically dominate $T_{i}$, the length of time interval
between two consecutive arrivals of a Hawkes process. But we know that exponentially distribution has exponential tails and thus
for $\theta>0$ small enough, $\mathbb{E}[e^{\theta\sum_{i=1}^{n}T_{i}}]\leq\mathbb{E}[e^{\theta\sum_{i=1}^{n}T_{i}^{\lambda^{-}}}]
=\mathbb{E}[e^{\theta T_{1}^{\lambda^{-}}}]^{n}<\infty$ for any $n\in\mathbb{N}$. Thus (iv) of Assumption \ref{A1} holds. Now, to prove (iii),
it is enough to check (i), (ii) and (iii') of Assumption \ref{A2}. It is easy to see that
that $\lambda_{t}\leq\lambda^{+}<\infty$ for any $t>0$ and thus 
$\mathbb{E}[N_{t}]=\mathbb{E}\left[\int_{0}^{t}\lambda_{s}ds\right]\leq\lambda^{+}t<\infty$ and (i) of Assumption \ref{A2}
is verified. (ii) of Assumption \ref{A2} is a result of Lemma \ref{SZ2} and law of large numbers of $N_{t}/t$
and finally (iii') of Assumption \ref{A2} can be verified
by easily checking the rate function in Lemma \ref{SZ}.
\end{proof}

\section*{Acknowledgements}

The author is supported by NSF grant DMS-0904701, DARPA grant and MacCracken Fellowship at New York University.
The author is very grateful to an anonymous referee for the helpful comments and suggestions.

\bibliographystyle{elsarticle-harv}
\bibliography{mybib}

\begin{thebibliography}{24}
\expandafter\ifx\csname natexlab\endcsname\relax\def\natexlab#1{#1}\fi
\providecommand{\url}[1]{\texttt{#1}}
\providecommand{\href}[2]{#2}
\providecommand{\path}[1]{#1}
\providecommand{\DOIprefix}{doi:}
\providecommand{\ArXivprefix}{arXiv:}
\providecommand{\URLprefix}{URL: }
\providecommand{\Pubmedprefix}{pmid:}
\providecommand{\doi}[1]{\href{http://dx.doi.org/#1}{\path{#1}}}
\providecommand{\Pubmed}[1]{\href{pmid:#1}{\path{#1}}}
\providecommand{\bibinfo}[2]{#2}
\ifx\xfnm\relax \def\xfnm[#1]{\unskip,\space#1}\fi
\bibitem[{Araman and Glynn(2006)}]{Araman}
\bibinfo{author}{Araman, V.}, \bibinfo{author}{Glynn, P.W.},
  \bibinfo{year}{2006}.
\newblock \bibinfo{title}{Tail asymptotics for the maximum of perturbed random
  walk}.
\newblock \bibinfo{journal}{Ann. Appl. Probab} \bibinfo{volume}{16},
  \bibinfo{pages}{1411--1431}.
\bibitem[{Asmussen and Albrecher(2010)}]{Asmussen}
\bibinfo{author}{Asmussen, S.}, \bibinfo{author}{Albrecher, H.},
  \bibinfo{year}{2010}.
\newblock \bibinfo{title}{Ruin Probabilities}.
\newblock \bibinfo{edition}{Second} ed., \bibinfo{publisher}{World Scientific},
  \bibinfo{address}{Singapore}.
\bibitem[{Asmussen and Kl\"{u}ppelberg(1996)}]{AsmussenK}
\bibinfo{author}{Asmussen, S.}, \bibinfo{author}{Kl\"{u}ppelberg, C.},
  \bibinfo{year}{1996}.
\newblock \bibinfo{title}{Large deviations results for subexponential tails
  with applications to insurance risk}.
\newblock \bibinfo{journal}{Stochastic Process. Appl} \bibinfo{volume}{64},
  \bibinfo{pages}{103--125}.
\bibitem[{Asmussen et~al.(1999)Asmussen, Schmidli and Schmidt}]{AsmussenEtAl}
\bibinfo{author}{Asmussen, S.}, \bibinfo{author}{Schmidli, H.},
  \bibinfo{author}{Schmidt, V.}, \bibinfo{year}{1999}.
\newblock \bibinfo{title}{Tail approximations for non-standard risk and
  queueing processes with subexponential tails}.
\newblock \bibinfo{journal}{Adv. Appl. Probab} \bibinfo{volume}{31},
  \bibinfo{pages}{422--447}.
\bibitem[{Bordenave and Torrisi(2007)}]{Bordenave}
\bibinfo{author}{Bordenave, C.}, \bibinfo{author}{Torrisi, G.L.},
  \bibinfo{year}{2007}.
\newblock \bibinfo{title}{Large deviations of {P}oisson cluster processes}.
\newblock \bibinfo{journal}{Stochastic Models} \bibinfo{volume}{23},
  \bibinfo{pages}{593--625}.
\bibitem[{Daley and Vere-Jones(2003)}]{Daley}
\bibinfo{author}{Daley, D.J.}, \bibinfo{author}{Vere-Jones, D.},
  \bibinfo{year}{2003}.
\newblock \bibinfo{title}{An Introduction to the Theory of Point Processes}.
  volume \bibinfo{volume}{I and II}.
\newblock \bibinfo{edition}{Second} ed., \bibinfo{publisher}{Springer},
  \bibinfo{address}{New York}.
\bibitem[{Dembo and Zeitouni(1998)}]{Dembo}
\bibinfo{author}{Dembo, A.}, \bibinfo{author}{Zeitouni, O.},
  \bibinfo{year}{1998}.
\newblock \bibinfo{title}{Large Deviations Techniques and Applications}.
\newblock \bibinfo{edition}{Second} ed., \bibinfo{publisher}{Springer},
  \bibinfo{address}{New York}.
\bibitem[{Embrechts and Veraverbeke(1982)}]{Embrechts}
\bibinfo{author}{Embrechts, P.}, \bibinfo{author}{Veraverbeke, N.},
  \bibinfo{year}{1982}.
\newblock \bibinfo{title}{Estimates for the probability of ruin with special
  emphasis on the possibility of large claims}.
\newblock \bibinfo{journal}{Insurance: Mathematics and Economics}
  \bibinfo{volume}{1}, \bibinfo{pages}{55--72}.
\bibitem[{Goldie and Resnick(1988)}]{Goldie}
\bibinfo{author}{Goldie, C.M.}, \bibinfo{author}{Resnick, S.},
  \bibinfo{year}{1988}.
\newblock \bibinfo{title}{Distributions that are both subexponential and in the
  domain of attraction of an extreme value distribution}.
\newblock \bibinfo{journal}{Adv. Appl. Probab} \bibinfo{volume}{20},
  \bibinfo{pages}{706--718}.
\bibitem[{Hawkes(1971)}]{Hawkes}
\bibinfo{author}{Hawkes, A.G.}, \bibinfo{year}{1971}.
\newblock \bibinfo{title}{Spectra of some self-exciting and mutually exciting
  point processes}.
\newblock \bibinfo{journal}{Biometrika} \bibinfo{volume}{58},
  \bibinfo{pages}{83--90}.
\bibitem[{Isham and Westcott(1979)}]{Isham}
\bibinfo{author}{Isham, V.}, \bibinfo{author}{Westcott, M.},
  \bibinfo{year}{1979}.
\newblock \bibinfo{title}{A self-correcting point process}.
\newblock \bibinfo{journal}{Stoch. Proc. Appl} \bibinfo{volume}{8},
  \bibinfo{pages}{335--347}.
\bibitem[{Kl\"{u}ppelberg and Mikosch(1997)}]{Kluppelberg}
\bibinfo{author}{Kl\"{u}ppelberg, C.}, \bibinfo{author}{Mikosch, T.},
  \bibinfo{year}{1997}.
\newblock \bibinfo{title}{Large deviations of heavy-tailed random sums with
  applications to insurance and finance}.
\newblock \bibinfo{journal}{Journal of Applied Probability}
  \bibinfo{volume}{34}, \bibinfo{pages}{293--308}.
\bibitem[{Schlegel(1998)}]{Schlegel}
\bibinfo{author}{Schlegel, S.}, \bibinfo{year}{1998}.
\newblock \bibinfo{title}{Ruin probabilities in perturbed risk models}.
\newblock \bibinfo{journal}{Insurance: Mathematics and Economics}
  \bibinfo{volume}{22}, \bibinfo{pages}{93--104}.
\bibitem[{Sen and Zhu(2013)}]{SenandZhu}
\bibinfo{author}{Sen, S.}, \bibinfo{author}{Zhu, L.}, \bibinfo{year}{2013}.
\newblock \bibinfo{title}{Large deviations for self-correcting point
  processes}.
\newblock \bibinfo{journal}{Prerint} .
\bibitem[{Stabile and Torrisi(2010)}]{Stabile}
\bibinfo{author}{Stabile, G.}, \bibinfo{author}{Torrisi, G.L.},
  \bibinfo{year}{2010}.
\newblock \bibinfo{title}{Risk processes with non-stationary {H}awkes
  arrivals}.
\newblock \bibinfo{journal}{Methodol. Comput. Appl. Prob} \bibinfo{volume}{12},
  \bibinfo{pages}{415--429}.
\bibitem[{Teugels and Veraverbeke(1973)}]{Teugels}
\bibinfo{author}{Teugels, J.L.}, \bibinfo{author}{Veraverbeke, N.},
  \bibinfo{year}{1973}.
\newblock \bibinfo{title}{Cram\'{e}r-type estimates for the probability of
  ruin}.
\newblock \bibinfo{journal}{C.O.R.E. Discussion Paper 7316} .
\bibitem[{Varadhan(1984)}]{VaradhanII}
\bibinfo{author}{Varadhan, S.R.S.}, \bibinfo{year}{1984}.
\newblock \bibinfo{title}{Large Deviations and Applications}.
\newblock \bibinfo{publisher}{SIAM}, \bibinfo{address}{Philadelphia}.
\bibitem[{Varadhan(2008)}]{Varadhan}
\bibinfo{author}{Varadhan, S.R.S.}, \bibinfo{year}{2008}.
\newblock \bibinfo{title}{Large deviations}.
\newblock \bibinfo{journal}{Annals of Probability} \bibinfo{volume}{36},
  \bibinfo{pages}{397--419}.
\bibitem[{Veraverbeke(1977)}]{Veraverbeke}
\bibinfo{author}{Veraverbeke, N.}, \bibinfo{year}{1977}.
\newblock \bibinfo{title}{Asymptotic behavior of {W}iener-{H}opf factors of a
  random walk}.
\newblock \bibinfo{journal}{Stochastic Process. Appl} \bibinfo{volume}{5},
  \bibinfo{pages}{27--37}.
\bibitem[{Vere-Jones(1988)}]{Vere}
\bibinfo{author}{Vere-Jones, D.}, \bibinfo{year}{1988}.
\newblock \bibinfo{title}{On the variance properties of stress release models}.
\newblock \bibinfo{journal}{Aust. J. Statist} \bibinfo{volume}{30A},
  \bibinfo{pages}{123--135}.
\bibitem[{Zheng(1991)}]{Zheng}
\bibinfo{author}{Zheng, X.}, \bibinfo{year}{1991}.
\newblock \bibinfo{title}{Ergodic theorems for stress release processes}.
\newblock \bibinfo{journal}{Stoch. Proc. Appl} \bibinfo{volume}{37},
  \bibinfo{pages}{239--258}.
\bibitem[{Zhu(2011a)}]{ZhuI}
\bibinfo{author}{Zhu, L.}, \bibinfo{year}{2011}a.
\newblock \bibinfo{title}{Large deviations for {M}arkovian nonlinear {H}awkes
  processes}.
\newblock \bibinfo{journal}{Prerint} .
\bibitem[{Zhu(2011b)}]{ZhuII}
\bibinfo{author}{Zhu, L.}, \bibinfo{year}{2011}b.
\newblock \bibinfo{title}{Process-level large deviations for nonlinear {H}awkes
  point processes}.
\newblock \bibinfo{journal}{to appear in Annales de l'Institut Henri
  Poincar\'{e}} .
\bibitem[{Zwart et~al.(2005)Zwart, Borst and D\c{e}bicki}]{Zwart}
\bibinfo{author}{Zwart, B.}, \bibinfo{author}{Borst, S.},
  \bibinfo{author}{D\c{e}bicki, K.}, \bibinfo{year}{2005}.
\newblock \bibinfo{title}{Subexponential asymptotics of hybrid fluid and ruin
  models}.
\newblock \bibinfo{journal}{Ann. Appl. Probab} \bibinfo{volume}{15},
  \bibinfo{pages}{500--517}.

\end{thebibliography}

\end{document}